\pgfplotsset{compat=1.14}
\def\input@path{{figures/}}
\renewcommand{\P}[1]{{\mathbb{P}}\left(#1\right)}
\newtheorem{theorem}{Theorem}[section]
\newtheorem{corollary}[theorem]{Corollary}
\newtheorem{lemma}[theorem]{Lemma}
\newtheorem{definition}[theorem]{Definition}
\crefname{theorem}{Theorem}{Theorems}
\Crefname{lemma}{Lemma}{Lemmas}
\Crefname{claim}{Claim}{Claims}
\Crefname{fact}{Fact}{Facts}
\Crefname{observation}{Observation}{Observations}
\Crefname{invariant}{Invariant}{Invariants}
\DeclareMathOperator{\OPT}{OPT}
\newcommand{\eps}{\varepsilon}
\newcommand{\bbRp}{{\mathbb{R}_{\ge 0}}}
\newcommand{\rb}[1]{\left( #1 \right)} 
\newcommand{\marginal}[2]{f\rb{#1 \mid #2}}
\newcommand{\cM}{\mathcal{M}}
\newcommand{\BBAlg}{\textsc{Alg}\xspace}
\DeclareMathOperator*{\argmin}{argmin}
\DeclareMathOperator*{\argmax}{argmax}
\DeclareMathOperator{\polylog}{polylog}
\newcommand{\Sopt}{\OPT}
\newcommand{\I}{\mathcal{I}}
\newcommand{\ind}[1]{\mathbb{I}_{\left\{#1\right\}}}
\newcommand{\cl}{\text{cl}}
\newcommand{\cS}{\mathcal{S}}
\newcommand{\cE}{\mathcal{E}}
\newcommand{\cW}{\mathcal{W}}
\renewcommand{\P}[1]{{\mathbb{P}}\left(#1\right)}
\newcommand{\E}[1]{{\mathbb{E}}\left[#1\right]}
\title{Deletion Robust Non-Monotone Submodular Maximization over Matroids
}
\author{
	Paul D{\"u}tting\thanks{Google Research. Email:   
\texttt{\{\href{mailto:duetting@google.com}{duetting}, \href{mailto:ashkannorouzi@google.com}{ashkannorouzi},
		\href{mailto:silviol@google.com}{silviol}, \href{mailto:zadim@google.com}{zadim}\}@google.com}}
	\and
	Federico Fusco\thanks{Department of Computer, Control, and Management 
Engineering ``Antonio Ruberti'', Sapienza University of Rome, 
Italy. Email:   
\texttt{\href{mailto:fuscof@diag.uniroma1.it}{fuscof}@diag.uniroma1.it}.
Part of this work was done while Federico was an intern at Google Research, hosted by Paul D{\"u}tting.}
\and
Silvio Lattanzi{$^*$}
\and
Ashkan Norouzi-Fard{$^*$}
\and
Morteza Zadimoghaddam{$^*$}}
\begin{document}

\maketitle

\begin{abstract}
Maximizing a submodular function is a fundamental task in machine learning and in this paper we study the deletion robust version of the problem under the classic matroids constraint. Here the goal is to extract a small size summary of the dataset that contains a high value independent set even after an adversary deleted some elements. We present constant-factor approximation algorithms, whose space complexity depends on the rank $k$ of the matroid and the number $d$ of deleted elements. In the centralized setting we present a $(4.597+O(\eps))$-approximation algorithm with summary size $O( \frac{k+d}{\eps^2}\log \frac{k}{\eps})$ that is improved to a $(3.582+O(\eps))$-approximation with $O(k + \frac{d}{\eps^2}\log \frac{k}{\eps})$ summary size when the objective is monotone.  In the streaming setting we provide a $(9.435 + O(\eps))$-approximation algorithm with summary size and memory $O(k + \frac{d}{\eps^2}\log \frac{k}{\eps})$; the approximation factor is then improved to  $(5.582+O(\eps))$ in the monotone case. 
\end{abstract}

\section{Introduction}

Submodular maximization is a fundamental problem in machine learning that encompasses a broad range of applications, including active learning~\citep{GolovinK11} sparse reconstruction~\citep{Bach10,DasK11,DasDK12}, video analysis~\citep{ZhengJCP14}, and data summarization~\citep{lin-bilmes-2011-class,BairiIRB15}.

Given a submodular function $f$, a universe of elements $V$, and a family $\mathcal{F} \subseteq 2^V$ of feasible subsets of $V$, the optimization problem consists in finding a set $S \in \mathcal{F}$ that maximizes $f(S)$. A natural choice for $\mathcal{F}$ are capacity constraints (a.k.a.~$k$-uniform matroid constraints) where any subset $S$ of $V$ of size at most $k$ is feasible. Another standard restriction, which generalizes capacity constraints and naturally comes up in a variety of settings, are matroid constraints. As an example where 
such more general constraints are needed, consider a movie recommendation application, where, given a large corpus of movies from various genres, we want to come up with a set of recommended videos that contains at most one movie from each genre.

Exact submodular maximization is a NP-hard problem, but efficient 
algorithms exist that obtain small constant-factor approximation guarantees in both the centralized and in the streaming setting \citep[e.g.,][]{fisher78-II,CalinescuCPV11,BuchbinderF19,ChakrabartiK15,FeldmanK018}.

In this work we design algorithms for submodular optimization over matroids that are robust to deletions. 
A main motivation for considering deletions are privacy and user preferences. 
For example, users may exert their ``right to be forgotten'' or may update their preferences and thus exclude some of the data points. For instance, in the earlier movie recommendation example, a user may mark some of the recommended videos as ``seen'' or ``inappropriate,'' and we may wish 
to quickly update the list of recommendations. 

\subsection{The Deletion Robust Approach}
Following \citet{MitrovicBNTC17},
we model robustness to deletion as a two phases game against an adversary. 
In the first phase, the algorithm receives a robustness parameter $d$ and chooses a subset $W \subseteq V$ as summary of the whole dataset $V$. Concurrently, an adversary selects a subset $D\subseteq V$ with $|D| \le d$. The adversary may know the algorithm but has no access to its random bits. In the second phase, the adversary reveals $D$ and the algorithm determines a feasible solution from $W \setminus D$. The goal of the algorithm is to be competitive with the optimal solution on $V \setminus D$. Natural performance metrics in this model are the algorithm's approximation guarantee 
and its space complexity as measured by the size of the set $W$. We consider this problem in both the centralized and in the streaming setting. 
The adversary studied is called oblivious, as its choice of the deleted elements $D$ \textit{does not} depend on the realized $W$, but possibly on the structure of the algorithm (not on its random bits). This is a natural assumption in the applications. Consider the movie recommendation example: the fact that a movie has already been watched by the user or that it is deemed inappropriate is independent of the fact that the specific movie has been selected or not in the summary.


In this model, to obtain a constant-factor approximation, the summary size has to be $\Omega(k+d)$, even when $f$ is additive and the constraint is a $k$-uniform matroid. To see this, consider the case where exactly $k+d$ of the elements have unitary weight and the remaining elements have weight zero. The adversary selects $d$ of the valuable elements to be deleted, but the algorithm does not know which. To be protective against any possible choice of the adversary, the best strategy of the algorithm is to choose $W$ uniformly at random from the elements that carry weight. This leads to an expected weight of the surviving elements of $|W|\cdot k/(k+d)$, while the optimum is $k$.


Prior work gave deletion robust algorithms for the special case of $k$-uniform matroids and monotone objective. The state-of-the-art   is a $2+O(\eps)$ approximation with $O(k + \frac{d \log k}{\eps^2})$ space in the centralized setting, while in the streaming setting the same approximation is achievable at the cost of an extra multiplicative factor of $O(\frac{\log k}{\eps})$ in space complexity \citep{KazemiZK18}. 

For general matroids, \citet{MirzasoleimanK017} give a black-box reduction that adapts (non-robust) submodular optimization algorithms to the deletion robust setting. Their approach yields a $3.147$-approximation algorithm for monotone objectives using \citet{Ashkan21} and a $5.205$ approximation for non-monotone objectives via \cite{BuchbinderF19}. The downside of \citet{MirzasoleimanK017} is that the space complexity is $\tilde O(kd)$.\footnote{Where the $\tilde{O}$ notation hides logarithmic factors.} 
The multiplicative factor $d$ is inherent in the construction, and the $k$ is needed in any subroutine, so this approach necessarily leads to a space complexity of $\Omega(dk)$. The main open problem from their work is to design \emph{space-efficient} algorithms for the respective problems. This question is important in many practical scenarios where datasets are large and space is an important resource.

\begin{table}[!t]
\begin{tabular}{lllll}
\hline
Reference & Objective    & Model       & Approximation                & Summary size                    \\[1.5pt]
\hline
\Cref{cor:centralized_nonmonotone} (this work)  & Non-Monotone & Centralized & $4.597 + O(\eps)$              & $O\big(\frac{k+d}{\eps^2}\log k\big)$   \\
\Cref{cor:centralized_monotone} (this work)  & Monotone     & Centralized & $3.582 + O(\eps)$ & $O\big(k + \frac{d}{\eps^2}\log k\big)$ \\[1.5pt] \hline
\cite{MirzasoleimanK017}    & Non-Monotone & Streaming   & $5.205$                             & $\tilde{O}(kd)$                         \\
\Cref{thm:robust-streaming} (this work)  & Non-Monotone & Streaming   & $9.435 + O(\eps)$            & $O\big(k + \frac{d}{\eps^2}\log k\big)$ \\
\cite{MirzasoleimanK017}    & Monotone     & Streaming   & $3.147$                      & $\tilde{O}(kd)$                         \\
\Cref{cor:streaming} (this work) & Monotone     & Streaming   & $5.582+O(\eps)$                       & $O\big(k + \frac{d}{\eps^2}\log k\big)$\\
\citet{Zhang22}     & Monotone     & Streaming   &  $4$                          & $O(k + d)$                              \\[1.5pt]
\hline
\end{tabular}
\caption{The table summarizes the results for deletion robust submodular maximization with matroid constraints. Note that the $\tilde O$ hides terms poly-logarithmic in the rank $k$. The approximation guarantees of \cite{MirzasoleimanK017} are obtained via the state-of-the-art streaming algorithms for submodular maximization with matroid constraints (\cite{BuchbinderF19} and \cite{Ashkan21}).}
\label{table:our_results}
\end{table}

\subsection{Our Results}

We present the first constant-factor approximation algorithms for deletion robust submodular maximization subject to general matroid constraints with almost optimal space usage, i.e., our algorithms only use $\tilde{O}(k+d)$ space. 
More formally, in the centralized setting we present a $(4.597 + O(\eps))$-approximation algorithm with summary size $O(\frac{k+d}{\eps^2}\log \frac{k}{\eps})$, that can be improved to a $(3.582+O(\eps))$-approximation with summary size $O(k + \frac{d}{\eps^2}\log \frac{k}{\eps})$ if the objective is monotone. In the streaming setting we provide a $(9.435+O(\eps))$-approximation algorithm with summary size and memory $O(k + \frac{d}{\eps^2}\log \frac{k}{\eps})$, that becomes a $(5.582+O(\eps))$-approximation if the objective is monotone. We remark that, besides the black box results by \cite{MirzasoleimanK017} that have a far-from-optimal space usage, we are the first to offer a positive result for non-monotone submodular functions in the deletion robust setting, with {\em any} type of constraint.
All our results are summarized and compared to the existing literature in \Cref{table:our_results}.

For monotone objectives the constants in the two cases are $2+\beta$ and $4+\beta$, where $\beta$ is $e/(e-1) \approx 1.582$, i.e., the best-possible approximation guarantee for the standard centralized problem \citep{CalinescuCPV11,Feige98}. For the non-monotone case we still have a $2+\beta$, where $\beta \approx 2.597$ is the state-of-the-art approximation guarantee for non-robust centralized submodular maximization subject to matroid constraint \citep{BuchbinderF19}; our approximation for the streaming setting also depends on the routine used, but in a more intricate way. The result of \cite{BuchbinderF19} are not known to be tight, thus better algorithms may exist and would automatically improve our approximation guarantees. 

We point out that the state-of-the-art approximation guarantees for (non-robust) submodular maximization with matroid constraint in the streaming model are $3.147$-approximation in the monotone case and a $5.205$-approximation in the general non-negative, i.e. possibly non-monotone, case \citep{Ashkan21}. The ``price of robustness'' is thus just an extra small additive term in the approximation factor, depending on the setting. At the same time, up to possibly a logarithmic factor, the memory requirements of all our results are tight. 

\subsection{Our Techniques}

Intuitively, the extra difficulty in obtaining space-efficient robust deletion summaries with general matroid constraints is that the algorithm can only use elements respecting the matroid constraint to replace a good deleted element from a candidate solution. This issue gets amplified when multiple elements need to be replaced. This is in sharp contrast to the specal case of $k$-uniform matroids where all elements can replace any other element.

Our algorithms start by setting a logarithmic number of value thresholds that span the average contribution of relevant optimum elements, and use these to group together elements with similar marginal value. The candidate solution is constructed using only elements from bundles that are large enough (at least a factor of $1/\eps$ larger than the number of deletions). Random selection from a large bundle protects/insures the value of the selected solution against adversarial deletions. 
In the centralized algorithm, it is possible to sweep through the thresholds in decreasing order. This monotonic iteration helps us design a charging mechanism for high value optimum elements dismissed due to the matroid constraint. We use the matroid structural properties to find an injective mapping from optimum elements rejected by the matroid property to the set of selected  elements. Given the monotonic sweeping of thresholds, the marginal value of missed opportunities cannot dominate the values of added elements. Finally, since each fixed element of the base set has very low probability to be added to the solution (due to the deletion robust sampling), we can handle {\em for free} the non-monotonicity of the objective in the analysis via a well known sampling argument for submodular functions \citep{FeigeMV07}.

In the streaming setting, elements arrive in an arbitrary order in terms of their membership to various bundles. So we keep adding elements as long as a large enough bundle exists. Addition of elements from lower value bundles might technically prevent us from selecting some high value elements due to the matroid constraint. So when considering a new element, we allow for a swap operation with any of the elements in the solution to maintain feasibility of the matroid constraint. We perform the swap if the marginal value of the new element $e$ is substantially (a constant factor) higher than the marginal value that the element $e'$ we are kicking out of the solution had when it was added to the solution. The constant factor gap between marginal values helps us account for not only $e'$ but also the whole potential chain of elements that $e'$ caused directly or indirectly to be removed in the course of the algorithm. As we repeatedely swap elements in and out the solution, it no longer holds that each fixed elements of the base set has very low probability to be, at some point, added to the solution. To tackle non-monotonicity, we embed a uniform sampling step in the procedure outlined above \citep[as in, e.g.,][]{FeldmanK018,AmanatidisFLLMR21,AmanatidisFLLR22}: any new element is discarded with some fixed probability before even being considered for swapping.




Finally, a consideration on the deletion robust parameter $d$. Our algorithms (just like the algorithms from earlier work) do not need exact knowledge of $d$; an upper bound on it yields the same approximation, at the cost of a larger summary. Also, with some extra work, it is possible to show that our approximation guarantees degrade gracefully when the estimate of $d$ was too conservative and there were actually more deletions. 
Note that some prior knowledge on $d$ is a reasonable assumption in the applications and that without any prior information about $d$ it is provably impossible to achieve good approximation and memory efficiency.
\subsection{Related Work}

Robust submodular optimization has been studied for more than a decade. In \citet{krause2008robust}, the authors study robustness from the perspective of multiple agents each with its own submodular valuation function; their objective is to select a subset that maximizes the minimum among all the agents' valuation functions. In some sense, this maximum minimum objective could be seen as a max-min fair subset selection goal.   Another robustness setting that deals with multiple valuation function is distributionally robust submodular optimization \citep{staib2019distributionally} in which we have access to samples from a distribution of valuations functions. These settings are fundamentally different from the robustness setting we study in our paper. 

\citet{orlin2018robust} and \citet{bogunovic2017robust} looked at robustness of a selected set in the presence of a few deletions. In their model, the algorithm needs to finalize the solution before the adversarial deletions are revealed and the adversary sees the choices of the algorithm. Therefore having at least $k$ deletions reduces the value of any solution to zero. Here, $k$ is the cardinality constraint or the rank of the matroid depending on the setting. This is the most prohibitive deletion robust setting we are aware of in the literature and not surprisingly the positive results of \citet{orlin2018robust} and \citet{bogunovic2017robust} are mostly useful when we are dealing with a few number of deletions. 

\citet{MirzasoleimanK017} study submodular maximization, and provide a general framework to empower insertion-only algorithms to process deletions on the fly as well as insertions. 
Their result works on general constraints including matroids. 
As a result, they provide dynamic algorithms that process a stream of deletions and insertions with an extra multiplicative overhead of $d$ (on both the computation time and memory footprint) compared to insertion-only algorithms. Here $d$ is the overall number of deletions over the course of the algorithm. They propose the elegant idea of running $d+1$ concurrent streaming submodular maximization algorithms where $d$ is the maximum number of deletions. 
Every element is sent to the first algorithm. If it is not selected, it is sent to the second algorithm; if it is not selected again, it is sent to the third algorithm and so on. With this trick, they maintain the invariant that the solution of one of these algorithms is untouched by the adversarial deletions, and therefore this set of $\tilde{O}(dk)$ elements suffice to achieve robust algorithms for matroid constraints. This approach has the drawback of having per update computation time linearly dependent on $d$ which can be prohibitive for large number of deletions. It also has a total memory of $\tilde{O}(dk)$ which could be suboptimal compared to the lower bound of $\Omega(d+k)$. 
For a closely related dynamic setting with cardinality constraint, \citet{LattanziMNTZ20} provide a $2$-approximation algorithm 
for $k$-uniform matroids with per update computation time of poly-logarithmic in the total number of updates (length of the stream of updates). This much faster computation time comes at the cost of a potentially much larger memory footprint (up to the whole ground set).
\citet{monemizadeh2020dynamic} independently designed an algorithm with similar approximation guarantee and an update time quadratic in the cardinality constraint with a smaller dependence on logarithmic terms.

As argued in the Introduction, a simple lower bound of $\Omega(d+k)$ exists on the memory footprint needed by any robust constant-factor approximation algorithm. The gap between this lower bound and the $O(dk)$ memory footprint in \citet{MirzasoleimanK017} motivated the follow-up works that focused on designing even more memory and computationally efficient algorithms. 

\citet{MitrovicBNTC17} and  \citet{KazemiZK18} are the first to study the deletion robust setting we consider in our work. They independently
designed submodular maximization algorithms for $k$-uniform matroids. \citet{MitrovicBNTC17} proposed a streaming algorithm that achieves constant competitive ratio with memory $O((k + d) \polylog(k))$. Their results extends to the case that the adversary is aware of the summary set $W$ before selecting the set $D$ of deleted elements. On the other hand, \citet{KazemiZK18} design centralized and distributed algorithms with constant factor approximation and $O(k + d\log(k))$ memory, as well as a streaming algorithm with $2$-approximation factor and memory footprint of $O(k \log(k) + d\log^2(k))$. 
We have borrowed some of their ideas including bundling elements based on their marginal values and ensuring that prior to deletions, elements are added to the solution only if they are selected uniformly at random from a large enough bundle (pool of candidates). 

Subsequently, \citet{AvdiukhinMYZ19} showed how to obtain algorithms for the case of knapsack constraints with similar memory requirements and constant factor approximation. While their approximation guarantees are far from optimal, their notion of robustness is stronger than the one we consider: there the adversary can select the set of deleted elements adaptively with respect to the summary produced by the algorithm.  

We aim to achieve the generality of the work of  \citet{MirzasoleimanK017} work by providing streaming algorithms that work for all types of matroid constraints while maintaining the almost optimal computation time and space efficiency of \citet{MitrovicBNTC17} and \citet{KazemiZK18}. 

Another well-studied robustness model is the sliding window model. In this case, the deletions occur as the algorithm sweeps through the stream of elements and in that sense they occur regularly rather than in an adversarial manner. Every element is deleted exactly $W$ steps after its arrival. So at every moment, the most recent $W$ elements are present and the objective is to select a subset of these present elements. \citet{epasto2017submodular} design a $3$ approximation algorithm for the case of cardinality constraints with a memory independent of the window size $W$. \citet{zhao2019submodular} provide algorithms that extend the sliding window algorithm to settings where elements have non-uniform lifespans and leave after arbitrary times.

Prior to deletion robust models and motivated by large scale applications, \citet{MirzasoleimanKSK13} designed the distributed greedy algorithm and showed that it achieves provable guarantees for the cardinality constraint problem under some assumptions. In follow-up work, \citet{mirrokni2015randomized} provided core-set frameworks that always achieve constant factor approximation in the distributed setting. \citet{barbosa2016new} showed how to approach the optimal $e/(e-1)$ approximation guarantee by increasing the round complexity of the distributed algorithm. For the case of matroid constraints, \citet{ene2019submodular} provided distributed algorithms that achieve the $e/(e-1)$ approximation with poly-logarithmic number of distributed rounds. For the streaming setting, \citet{Ashkan21} provided a $3.147$ competitive ratio algorithm with $\Tilde{O}(k)$ memory with a matroid constraint of rank $k$. 

A preliminary version of this work considering only monotone objectives has appeared as \cite{DuettingFLNZ22}. We mention that, in follow-up work, \citet{Zhang22} design a simple algorithm for the streaming version of the deletion-robust submodular maximization problem with monotone objective subject to a $p$-matroid constraint. Their algorithm has optimal space usage of $O(k + d)$ and achieves a $4p$ approximation. They handle robustness via an elegant non-uniform sampling technique that avoids the extra $O(\log k)$ due to keeping the different buckets.

\section{Problem formulation and preliminary results}
We consider a set function $f: 2^V \to \bbRp$ on a (potentially large) ground set $V$. Given two sets $X, Y \subseteq V$, the \emph{marginal gain} of $X$ with respect to $Y$ quantifies the change in value of adding $X$ to $Y$ and is defined as
\[
	\marginal{X}{Y} = f(X \cup Y) - f(Y) \,.
\]
When $X$ consists of a singleton $x$, we use the shorthand $f(x|Y)$ instead of $f(\{x\}|Y)$. The function $f$ is called \emph{monotone} if $\marginal{e}{X}  \geq 0$ for each set $X \subseteq V$ and element $e \in V$, and \emph{submodular} if for any two sets $X$ and $Y$ such that $X \subseteq Y \subseteq V$ and any element $e \in V \setminus Y$ we have $ \marginal{e}{X} \ge \marginal{e}{Y}.$
Throughout the paper, we assume that $f$ is given in terms of a value oracle that computes $f(S)$ for given $S \subseteq V$ and that $f$ is \emph{normalized}, i.e., $f(\emptyset) = 0$. When not specified otherwise, we do not assume monotonicity of the function $f$. We slightly abuse the notation and for a set $X$ and an element $e$, use $X+e$ to denote $X \cup \{e\}$ and $X - e$ for $X \setminus \{e\}$. 

A non-empty family of sets $\cM \subseteq 2^V$ is called a \emph{matroid} if it satisfies the following properties. \textit{Downward-closedness}: if $A \subseteq B$ and $B \in \cM$, then $A \in \cM$; \textit{augmentation}: if $A, B \in \cM$ with $|A| < |B|$, then there exists $e \in B$ such that $A + e \in \cM$. We call a set $A \subseteq 2^V$ \emph{independent}, if $A \in \cM$, and \emph{dependent} otherwise. An independent set that is maximal with respect to inclusion is called a {\em base}; all the bases of a matroid share the same cardinality $k$, which is referred to as the {\em rank} of the matroid. In general, it is possible to define the rank set function $r: 2^V \to \mathbb{N}$ that outputs the cardinality of the largest independent set: $r(T) = \max\{|I| \text{ s.t. } I \subseteq T \text { and } T\in \cM\}$. Dually to basis, for any $A\notin \cM$, we can define a {\em circuit} $C(A)$ as a minimal dependent subset of $A$, i.e. $C(A) \notin \cM$ such that all its proper subsets are independent. Starting from any matroid $\cM$ it is possible to define two auxiliary structures:  the \emph{contraction} of $\cM$ by $S$, written $\cM / S$, which is the matroid on $V\setminus S$ with rank function $r(T) = r_\cM(T \cup S) - r_\cM(S)$, and the \emph{restriction} of $\cM$ by $S$, which is the matroid $\cM'$ on $E \setminus S$ whose independent sets $T$ are characterized by the property that $T \cup S \in \cM$.

\subsection{The Deletion Robust Model}
The deletion robust model consists of two phases. The input of the first phase is the ground set $V$ and a robustness parameter $d$, while the input of the second phase is an adversarial set of $d$ deleted elements $D \subset V$, along with the outputs of the first phase. The goal is to design an algorithm that constructs a small size summary $W \subseteq V$ that is robust to deletions in the first phase, and a solution $S \subseteq W \setminus D$ that is independent with respect to matroid $\cM$ in the second phase. The difficulty of the problem lies in the fact that the summary $W$ has to be robust against {\em any} possible choice of set $D$ by an adversary oblivious to the randomness of the algorithm. 

For any set of deleted elements $D$, the optimum solution denoted by $\OPT(V \setminus D)$ is defined as
\[
f(\OPT(V \setminus D)) = \argmax_{R \subseteq V \setminus D, R \in \cM} f(R).
\]
We say that a two phase algorithm is an $\alpha$ approximation for this problem if for all $ D \subset V \text{ s.t. } |D| \le d, $ it holds that 
\[
    f(\OPT(V \setminus D)) \le \alpha \cdot \E{f(S_D)}, 
\]
where $S_D$ is the solution produced in Phase II when set $D$ is deleted and the expectation is with respect to the eventual internal randomization of the algorithm. Besides $\alpha$, an important feature of a two phase algorithm is its summary size, i.e., the cardinality of the set $W$ returned by the first phase. 

In this paper we also consider the streaming version of the problem where the elements in $V$ are presented in some arbitrary order (Phase I) and at the end of such online phase the algorithm has to output a deletion robust summary $W.$ Finally, the deleted set $D$ is revealed and Phase II on $W \setminus D$ takes place offline. The quality of an algorithm for the streaming problem is not only assessed by its approximation guarantee and summary size, but is also measured in terms of its {\em memory}, i.e., the cardinality of the buffer in the online phase.

\subsection{Preliminary Results}

We conclude this section with three preliminary results: the folklore sampling Lemma by \cite{BuchbinderFNS14}, a combinatorial Lemma we derived from the well known Hall's Marriage Theorem and a graph theoretical result from \cite{FeldmanK018}. While the first Lemma is our main tool to address non-monotonicity, the other two results are used for the charging arguments in the analysis of our algorithms.

We start with the sampling Lemma, that has been proved by \citet{FeigeMV11} and that we report as in Lemma 2.2 of \citet{BuchbinderFNS14}.

\begin{lemma}[Sampling Lemma]
	\label{lem:sampling}
	Let $f: 2^V \to \mathbb{R}_{\ge 0}$ be a (possibly not normalized) submodular set function, let $ X \subseteq 
	V$ and let $X(p)$ be a sampled subset, where each element of $X$ appears 
	with probability at most $p$ (not necessarily independent). Then $\mathbb{E} \left[ f(X(p)) \right] \geq (1-p)f(\emptyset)\,$.
\end{lemma}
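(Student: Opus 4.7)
The plan is to use the Lovász extension $\hat{f}: [0,1]^V \to \mathbb{R}$ of $f$, defined by
\[
\hat{f}(\mathbf{x}) = \int_0^1 f(\{e \in V : x_e \geq \theta\}) \, d\theta,
\]
together with Jensen's inequality. Two standard facts are needed: for any $R \subseteq V$, $\hat{f}(\mathbf{1}_R) = f(R)$ (since almost every level set of the indicator equals $R$); and $\hat{f}$ is convex on $[0,1]^V$, which is the classical characterization of submodularity of $f$. Convexity holds even without normalization, since replacing $f$ by $f - f(\emptyset)$ shifts $\hat{f}$ by the same constant and preserves convexity.

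Next, let $\mathbf{x}$ denote the vector of marginal probabilities $x_e = \P{e \in X(p)}$, so $x_e \leq p$ for every $e \in X$ and $x_e = 0$ for $e \notin X$. Applying Jensen's inequality to the convex function $\hat{f}$ and the (possibly correlated) random indicator vector $\mathbf{1}_{X(p)}$ yields
\[
\E{f(X(p))} = \E{\hat{f}(\mathbf{1}_{X(p)})} \geq \hat{f}(\mathbf{x}).
\]
Crucially, this inequality does not require the coordinates of $\mathbf{1}_{X(p)}$ to be independent, which matches the ``not necessarily independent'' hypothesis of the lemma.

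It then remains to lower bound $\hat{f}(\mathbf{x})$, which I would do by splitting the defining integral at the threshold $p$. For $\theta \in (p, 1]$ the level set $\{e : x_e \geq \theta\}$ is empty since all $x_e \leq p$, so that portion contributes exactly $(1-p) f(\emptyset)$. For $\theta \in [0, p]$ the integrand $f(\{e : x_e \geq \theta\})$ is non-negative by the hypothesis $f \geq 0$, so that portion contributes a non-negative quantity. Summing the two contributions gives $\hat{f}(\mathbf{x}) \geq (1-p) f(\emptyset)$, which combined with the Jensen step yields the claim.

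The main conceptual hurdle is recognising that the ``not necessarily independent'' clause is handled effortlessly by Jensen applied to the Lovász extension; intuition based on the multilinear extension can be misleading here, since the multilinear extension only equals $\E{f(X(p))}$ in the independent case. Once one commits to the Lovász extension, non-negativity of $f$ does all the remaining work, and no induction on $|X|$ or delicate combinatorial sampling argument is required.
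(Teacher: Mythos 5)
The paper does not give a proof of this lemma; it imports it verbatim as Lemma 2.2 of \citet{BuchbinderFNS14}, crediting \citet{FeigeMV11} for the original argument, so there is no in-paper proof to compare against. Your Lov\'asz-extension argument is a correct, self-contained proof, and every step checks out: the identity $\hat f(\mathbf{1}_R)=f(R)$ holds because for every $\theta\in(0,1]$ the level set of an indicator vector equals $R$; convexity of $\hat f$ is the standard characterization of submodularity of $f$, and, as you note, it is unaffected by the affine shift $f\mapsto f-f(\emptyset)$, so normalization is irrelevant; Jensen's inequality applied to the random indicator $\mathbf{1}_{X(p)}$ needs nothing beyond convexity of $\hat f$ and $\E{\mathbf{1}_{X(p)}}=\mathbf x\in[0,1]^V$, so it handles correlated coordinates for free; and splitting the defining integral at $\theta=p$ correctly isolates the $(1-p)f(\emptyset)$ contribution, since $\max_e x_e\le p$ forces the level set to be empty on $(p,1]$, while non-negativity of $f$ bounds the remaining piece by zero from below. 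The one conceptual choice that matters is exactly the one you flagged: using the Lov\'asz extension rather than the multilinear extension, since only the former supports Jensen under arbitrary correlation. Your route also makes transparent precisely where the hypotheses $f\ge 0$ and ``probability at most $p$'' enter, which the usual citation chain leaves implicit.
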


We move now to our combinatorial Lemma. At a high level, it is a general tool that maps injectively elements in a dependent set to those of an independent set under some conditions. 

\begin{lemma}[Combinatorial Lemma]
\label{lem:hall-lemma}
Consider a matroid $\cM \subseteq 2^V$ and two sets $F \subseteq V$, and $G \in \mathcal{M}$. Suppose that for all $x \in G\setminus F$ there exist a set $F_x \subseteq F, F_x \in \mathcal{M}$ such that $F_x  + x \not\in \mathcal{M}$. Then there exists a mapping $h: G \setminus F \rightarrow F$ such that 
\begin{itemize}
    \item for all $x \in G\setminus F$, $h(x) \in F_x$, and
    \item for all $x,y \in G\setminus F$ with $x \neq y$, $h(x) \neq h(y)$.
\end{itemize}
\end{lemma}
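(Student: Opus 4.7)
The plan is to reduce the claim to Hall's Marriage Theorem applied to a natural bipartite graph, and then verify Hall's condition by a short rank-function argument. Concretely, I would build a bipartite graph $H$ with left part $G \setminus F$, right part $F$, and an edge between $x \in G \setminus F$ and $y \in F$ whenever $y \in F_x$. A matching of $H$ saturating the left side is exactly an injective mapping $h: G \setminus F \to F$ with $h(x) \in F_x$, so by Hall's theorem it suffices to show that for every $S \subseteq G \setminus F$ one has $|N(S)| \ge |S|$, where $N(S) = \bigcup_{x \in S} F_x$.

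To verify Hall's condition, I would first observe that for each $x \in G \setminus F$, the hypothesis $F_x \in \cM$ and $F_x + x \notin \cM$ together with $x \notin F_x$ (because $x \notin F$) imply that $x$ lies in the span/closure of $F_x$ in $\cM$: indeed adding $x$ does not increase the rank, so $r(F_x + x) = r(F_x) = |F_x|$. Fix any $S \subseteq G \setminus F$ and set $N := N(S)$. Since $F_x \subseteq N$ for every $x \in S$, each such $x$ is also in the span of $N$, and hence $r(N \cup S) = r(N)$ by the standard closure property of matroid rank. On the other hand, $S \subseteq G$ and $G \in \cM$, so by downward-closedness $S$ is independent; this means $r(N \cup S) \ge |S|$. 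Combining these with $r(N) \le |N|$ gives
\[
    |S| \;\le\; r(N \cup S) \;=\; r(N) \;\le\; |N(S)|,
\]
which is exactly Hall's condition. Invoking Hall's Marriage Theorem then yields the desired injective $h$.

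I expect the main subtlety to be the rank identity $r(N \cup S) = r(N)$: one needs to argue carefully that if each element of $S$ lies in the span of $N$, then so does the whole set $S$, i.e., that the matroid closure operator is idempotent/monotone in this sense. This is a textbook consequence of the submodularity of $r$ (or, equivalently, of the exchange axiom), but it is the only non-trivial matroid fact in the proof; the rest of the argument is a routine application of Hall's theorem.
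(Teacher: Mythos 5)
Your proof is correct and takes essentially the same route as the paper: reduce to Hall's Marriage Theorem and verify the marriage condition via closure/rank properties of the matroid. Your verification of Hall's condition is in fact slightly cleaner than the paper's, which detours through explicit restriction and contraction of $\cM$, whereas you close the argument directly from $|S| = r(S) \le r(N(S) \cup S) = r(N(S)) \le |N(S)|$ once you observe $S \subseteq \cl(N(S))$.
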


Intuitively, $h$ as a {\emph{semi-matching}} that matches all elements in $G \setminus F$ to an element in $\cup_{x \in G\setminus F} F_x$, while some of the elements in $\cup_{x \in G\setminus F} F_x \subseteq F$ can remain unmatched. To prove the Lemma, we use the combinatorial version of Hall's Marriage Theorem, that concerns set systems and the existence of a transversal (a.k.a.~system of distinct representatives). 
Formally, let $\cS$ be a family of finite subsets of a base set $X$ ($\cS$ may contain the same set multiple times). A \emph{transversal} is an injective function $f: \cS \rightarrow X$ such that $f(S) \in S$ for every set $S \in \cS$. In other words, $f$ selects one representative from each set in $S$ in such a way that no two of these representatives are equal.  

\begin{theorem}[Hall (1935)] \label{hall-thm}
A family of sets $\cS$ has a transversal if and only if $\cS$ satisfies the marriage condition: i.e., if for each subfamily $\cW \subseteq \cS$, it holds that $|\cW| \leq \left|\bigcup_{F \in \cW} F\right|.$
\end{theorem}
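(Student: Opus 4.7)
The plan is to prove Hall's theorem by induction on the size $n = |\cS|$ of the family. The necessity direction is essentially immediate: if a transversal $f$ exists, then for any subfamily $\cW$ the values $\{f(F) : F \in \cW\}$ are $|\cW|$ distinct elements, each lying in $\bigcup_{F \in \cW} F$, which yields the marriage condition. The bulk of the work lies in sufficiency, which I would handle by splitting into two cases depending on whether the marriage inequality is tight on some proper non-empty subfamily.

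In the \emph{slack} case --- every proper non-empty $\cW \subsetneq \cS$ satisfies $|\cW| < |\bigcup_{F \in \cW} F|$ --- I would pick any $S \in \cS$ and any element $x \in S$ (which exists because the marriage condition applied to $\{S\}$ gives $|S| \ge 1$), set $f(S) = x$, and pass to the reduced family $\cS' = \{F \setminus \{x\} : F \in \cS \setminus \{S\}\}$ on base set $X \setminus \{x\}$. Each subfamily of $\cS'$ corresponds to a proper subfamily of $\cS$ whose union shrinks by at most one when $x$ is removed; the strict inequality provides exactly that extra slack, so $\cS'$ still satisfies the marriage condition. Induction then produces a transversal for $\cS'$, which when combined with $f(S)=x$ gives a transversal for $\cS$.

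In the \emph{tight} case, some proper non-empty $\cW \subsetneq \cS$ satisfies $|\cW| = |U|$ where $U = \bigcup_{F \in \cW} F$. Since $|\cW| < n$ and $\cW$ inherits the marriage condition, the inductive hypothesis gives a transversal $f_1: \cW \to U$, which by a counting argument must use every element of $U$. For the remainder $\cS \setminus \cW$, I would consider the restricted sets $\{F \setminus U : F \in \cS \setminus \cW\}$ over the ground set $X \setminus U$ and verify the marriage condition persists: for any $\cV \subseteq \cS \setminus \cW$, applying the original marriage condition to $\cV \cup \cW$, using disjointness $|\cV \cup \cW| = |\cV| + |\cW|$, the decomposition $|\bigcup_{F \in \cV \cup \cW} F| = |U| + |\bigcup_{F \in \cV}(F \setminus U)|$, and cancelling $|\cW| = |U|$, yields $|\cV| \le |\bigcup_{F \in \cV}(F \setminus U)|$. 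Induction delivers a transversal $f_2$, and $f_1 \cup f_2$ is a transversal for $\cS$ since the images are disjoint by construction.

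The main delicate step will be the book-keeping in the tight case: showing that the restricted family on $X \setminus U$ still satisfies the marriage condition hinges precisely on the tightness equality $|\cW| = |U|$ being used to absorb the $|\cW|$ term. The slack case is the easier of the two; its only subtle point is observing that strict inequality on every proper subfamily is exactly what is needed to guarantee that removing a single element from the universe does not break the marriage condition anywhere. With both cases handled, the induction closes and the theorem follows.
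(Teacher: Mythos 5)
The paper does not prove Hall's theorem --- it is stated as a cited classical result (Hall 1935) and used as a black box in the proof of the Combinatorial Lemma. There is therefore nothing to compare directly against. That said, your proof is correct: it is the standard Halmos--Vaughan inductive argument. The necessity direction is right. For sufficiency, your two-case split works: in the slack case the strict inequality on proper non-empty subfamilies exactly compensates for the element $x$ that gets removed from the universe, so the reduced family $\cS'$ still satisfies the marriage condition and has one fewer set; in the tight case the equality $|\cW| = |U|$ cancels against $|U|$ in the inclusion argument to transfer the marriage condition to the restricted family $\{F \setminus U : F \in \cS \setminus \cW\}$, and the images of the two inductively obtained transversals are disjoint because one lands in $U$ and the other in $X \setminus U$. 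One small point worth noting, since the paper explicitly allows $\cS$ to contain repeated sets: $\cS$ should be treated as an indexed family (multiset), so operations like $\cS \setminus \{S\}$ or $\cV \cup \cW$ in your argument are performed at the level of indices; this changes no step of the reasoning but avoids ambiguity when two members of $\cS$ coincide as sets.
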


We are ready to present the proof of our combinatorial lemma.
\begin{proof}
    Let $W$ be any subset of $G\setminus F$ and define $F_W = \bigcup_{x \in W} F_x$. We want to show that 
    \[
    |F_W| = |\bigcup_{x \in W} F_x| \geq |W|. 
    \]

    In order to do that we first show that $\cl(W \cup F_W) = \cl(F_W)$, where $\cl(.)$ denotes the \emph{closure} (or \emph{span}) of a set. We have that $W \subseteq \cl(F_W)$, in fact, for each element $x \in W$ there exists a subset $F_x \subseteq F_W$ such that $F_x + x \not\in \I$ and therefore $x \in \cl(F_x)$, which implies, by monotonicity of the closure with respect to the inclusion, that $x \in \cl(F_W)$.
    We also know that $F_W \subseteq \cl(F_W)$, hence $F_W \cup W \subseteq \cl(F_W)$. If we apply the closure to both sets, we get
    \[
        \cl(F_W \cup W) \subseteq \cl(\cl(F_W)) = \cl(F_W) \subseteq \cl(F_W\cup W),
    \]
    where the equation follows by the well-known properties of closure. This shows that $\cl(F_W \cup W) = \cl(F_W)$ as claimed.
    
    Now let's look at the restriction of the matroid $\cM$ to $\cl(F_W \cup W) = \cl(F_W)$. Afterwards, contract this matroid by $(F \cap G) \cap \cl(F_W)$ and call the resulting matroid with $\cM'$. We claim that $W$ is independent in this new matroid $\cM'$. This is due to, $W \subseteq G \setminus F$ and $W \cup (F \cap G) \subseteq G$ and $G \in \I$. If we call $r'$ the rank of the matroid $\cM'$, we have $
    r_\cM(\cl(F_W)) \geq r_\cM(\cl(F_W) \setminus (F \cap B) ) = r' \geq |W|.$ Finally, it holds that $
    |F_W| \geq r_\cM(F_W) = r_\cM(\cl(F_W)). $
    Putting these two chains of inequalities together we obtain $|F_W| \geq |W|$ as claimed. The proof follows by applying \Cref{hall-thm}.
\end{proof}

The last preliminary result we present is Lemma $13$ of \cite{FeldmanK018}. It concerns directed acyclic graphs (DAGs) where the nodes are also elements of a matroid. Under some assumption, it guarantees the existence of an injective mapping between elements in an independent set and of the sinks of the DAG. As a convention, we denote with $\delta^+(u)$ the out-neighborhood of any node $u.$  

\begin{lemma}
\label{lem:graph}
Consider an arbitrary directed acyclic graph $G = (V, E)$ whose vertices are elements of some matroid $\cM$. If every non-sink vertex $u$ of $G$ is spanned by $\delta^+(u)$ in $\cM$, then for every set $S$ of vertices of $G$ which is independent in $\cM$ there must exist an injective function $\psi$ such that, for every vertex $u \in S$, $\psi(u)$ is a sink of $G$ which is reachable from $u$.
\end{lemma}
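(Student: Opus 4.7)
The plan is to reduce the existence of the injective map $\psi$ to a Hall-type marriage condition, and then to verify that condition using the matroid spanning hypothesis together with an induction along the reverse topological order of the DAG.

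First, for every vertex $u \in V$ let $R(u)$ denote the set of sinks of $G$ that are reachable from $u$ (with $R(u) = \{u\}$ if $u$ itself is a sink). The map $\psi$ we seek is exactly a transversal of the family $\{R(u) : u \in S\}$, so by Hall's theorem (\Cref{hall-thm}) it suffices to show that for every $T \subseteq S$ we have $|R(T)| \geq |T|$, where $R(T) = \bigcup_{u \in T} R(u)$.

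The key structural claim I would establish is that every vertex $u \in V$ is spanned by $R(u)$ in $\cM$, i.e., $u \in \cl(R(u))$. This I prove by induction on the position of $u$ in a reverse topological order of $G$. The base case is when $u$ is a sink, where $u \in R(u)$ trivially. For the inductive step, $u$ is a non-sink, so by hypothesis $u$ is spanned by $\delta^+(u)$; for every $v \in \delta^+(u)$ the inductive hypothesis gives $v \in \cl(R(v)) \subseteq \cl(R(u))$, since $R(v) \subseteq R(u)$ by transitivity of reachability. Combining these inclusions with monotonicity and idempotence of the closure yields $u \in \cl(\delta^+(u)) \subseteq \cl(R(u))$.

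With this in hand, for any $T \subseteq S$ we get $T \subseteq \cl(R(T))$, hence $r(T \cup R(T)) = r(R(T))$. Since $T \subseteq S$ is independent, $|T| = r(T) \leq r(R(T)) \leq |R(T)|$, which is precisely Hall's condition. Applying \Cref{hall-thm} to the family $\{R(u) : u \in S\}$ produces the required injection $\psi$ mapping each $u \in S$ to a sink $\psi(u) \in R(u)$, i.e., a sink reachable from $u$. The main subtlety, and the step I would be most careful with, is the inductive claim that $u \in \cl(R(u))$: here one has to make sure that the closure is taken in the original matroid $\cM$ (not some contraction) and that the reverse-topological induction is well-defined, which is guaranteed by acyclicity of $G$.
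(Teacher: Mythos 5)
Your proof is correct, but there is no in-paper argument to compare it against: the paper does not reproduce a proof of this statement, it simply imports it as Lemma~13 of \citet{FeldmanK018} and refers the reader there. Your route---define $R(u)$ as the sinks reachable from $u$, reduce the existence of $\psi$ to Hall's marriage condition on the family $\{R(u)\}_{u\in S}$, and verify that condition via the structural claim $u \in \cl(R(u))$ proved by reverse-topological induction---is complete and sound. The inductive step is the heart of the matter and you handle it correctly: $R(v)\subseteq R(u)$ for $v\in\delta^+(u)$ gives $\delta^+(u)\subseteq\cl(R(u))$, and idempotence of closure then pulls $u\in\cl(\delta^+(u))$ into $\cl(R(u))$; from there the rank computation $|T|=r(T)\le r(T\cup R(T))=r(R(T))\le |R(T)|$ delivers Hall's condition. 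One point worth stating explicitly (you implicitly rely on it): the family handed to \Cref{hall-thm} must be read as indexed by $u\in S$, i.e.\ as a multiset, since distinct $u_1,u_2\in S$ can have $R(u_1)=R(u_2)$; this is fine because the statement of \Cref{hall-thm} in the paper explicitly permits repeated sets, and your rank bound shows the condition still holds in that reading.
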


\section{Centralized Algorithm} \label{section:cent}
In this section, we present a centralized algorithm for the Deletion Robust Submodular Maximization problem subject to a matroid constraint. To that end, we start by defining some notations. Let $\tilde e$ be the $(d+1)$-th element with largest value according to $f$ and let $\Delta$ be its value. Given the precision parameter $\eps$, we define the set of relevant thresholds $T$ as follows 
\[
    T = \left\{ (1 + \eps)^i\mid \eps \cdot \frac{\Delta}{(1+\eps)k} < (1 + \eps)^i \le \Delta \right\}.
\]

The first phase of our algorithm constructs the summary in iterations using two main sets: a candidate solution $A$ and a reservoir $B$ of good elements, that are grouped into buckets of similar elements. The algorithm goes over the thresholds in $T$ in decreasing order and updates sets $A, B$, and $V$. Let $\tau$ be the threshold considered at some point, then $B_\tau$ contains any element $e \in V$ such that $f(e | A) > \tau$ and $ A+ e \in \cM$. These are the high contribution elements that can be added to $A$. As long as the size of $B_\tau> (k+d)/\eps$, the algorithm chooses uniformly at random an element from $B_\tau$, adds it to $A$ and recomputes $B_\tau$. We observe that $A$ is robust to deletions, i.e., when $d$ elements are deleted, the probability of one specific element in $A$ being deleted is intuitively at most $\eps$. Moreover, since each element added to $A$ is drawn from a pool of elements with {\em similar} marginals, the value of this set after the deletions decreases at most by a factor $(1-\eps)$ in expectation (this is a very slack bound). 

As soon as the cardinality of $B_{\tau}$ drops below $(k+d)/\varepsilon$, no more elements can be added directly from it to $A$ while keeping $A$ robust and feasible. Therefore, we remove these elements from $V$ and save them for Phase II. During the execution of the algorithm we need to take special care of the top $d$ element with highest $f$ values. To avoid complications, we remove them from the instance before starting the procedure and add them to set $B$ at the end. This does not affect the general logic and only simplifies the presentation and the proofs. The pseudocode of the centralized algorithm for Phase I is given in \Cref{alg:centralized-phase-I}.

\begin{algorithm}[t]
\caption{Centralized Algorithm Phase I} \label{alg:centralized-phase-I}
\begin{algorithmic}[1]
\STATE \textbf{Input:} Precision $\varepsilon$ and deletion parameter $d$
\STATE $\Delta \gets $ $(d+1)^{th}$ largest value in $\{f(e) \mid e \in V\}$.
\STATE $V_d \gets $ elements with the $d$ largest values.
\STATE $V \gets V\setminus V_d$, $A \gets \emptyset$.
\STATE $T = \left\{ (1 + \eps)^i\mid \eps \cdot \frac{\Delta}{(1+\eps)k} < (1 + \eps)^i \le \Delta \right\}$
\FOR{$\tau \in T$ in decreasing order}
    \STATE $B_{\tau} \gets \{e \in V  \mid A + e \in \cM, f(e \mid A) \geq \tau \}$ 
    \WHILE{$|B_\tau| \geq \frac{k+d}{\varepsilon}$}
    \STATE $e \gets$ a random element sampled independently and uniformly from $B_{\tau}$.
    \STATE $V \gets V - e$, $A \gets A + e$.
    \STATE $B_{\tau} \gets \{e \in V  \mid A + e \in \cM, f(e \mid A) \geq \tau \}$
    \ENDWHILE
    \STATE Remove $B_{\tau}$ from $V$.
\ENDFOR
\STATE $B \gets V_d \cup \bigcup_{\tau \in T} B_{\tau}$.
\RETURN $A, B$.
\end{algorithmic}
\end{algorithm}

The summary $W$ computed at the end of Phase I is composed by the union of $A$ and $B$ and is then passed to the second phase of our algorithm, which uses as routine an arbitrary algorithm \BBAlg for submodular maximization subject to matroid constraint. \BBAlg takes as input a set of elements, function $f$ and matroid $\cM$ and returns a $\beta$-approximate solution. In this phase we simply use \BBAlg to compute a solution among all the elements in $A$ and $B$ that survived the deletion and return the best among the computed solution and the value of the surviving elements in $A$.

\begin{algorithm}[t]
\caption{Algorithm Phase II} \label{alg:phase-II}
\begin{algorithmic}[1]
\STATE \textbf{Input:} $A$ and $B$ outputs of phase I, set $D$ of deleted elements and optimization routine $\BBAlg$
\STATE $A' \gets A \setminus D$, $B' \gets B \setminus D$
\STATE $\tilde S \gets \BBAlg(A' \cup B')$
\RETURN $S \gets \argmax \{f(A'), f(\tilde S)\}$
\end{algorithmic}
\end{algorithm}

\begin{theorem}
\label{thm:robust-centralized}
    Consider the problem of deletion robust non-monotone submodular maximization with matroid constraints.
    For $\eps \in (0, 1/5)$, the Centralized Algorithm (\Cref{alg:centralized-phase-I} and \Cref{alg:phase-II}) is in expectation a $(2+ \beta + O(\eps))$-approximation algorithm with summary size $O(\frac{k + d}{\eps^2}\log{\frac k\eps})$, where $\beta$ is the approximation ratio of the auxiliary algorithm \BBAlg.
\end{theorem}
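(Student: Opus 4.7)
My plan is to split the proof into three parts: a summary size bound, a deletion-survival guarantee for the candidate set $A$, and an approximation analysis built on top of the Combinatorial and Sampling Lemmas. The summary size follows by counting: $A\in\cM$ gives $|A|\le k$, while $|B|\le|V_d|+\sum_{\tau\in T}|B_\tau|\le d+|T|(k+d)/\eps$, and $|T|=O(\eps^{-1}\log(k/\eps))$, so $|W|=|A|+|B|=O((k+d)\eps^{-2}\log(k/\eps))$.

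The backbone of the analysis is a bucket invariant: at every moment of Phase I, an element $e$ in the current bucket $B_\tau$ satisfies $f(e\mid A)\in[\tau,(1+\eps)\tau)$. The lower bound is by definition. For the upper bound I will argue that since $e$ is currently in $V$, $e$ was in $V$ also during the previous (larger) threshold $\tau'=(1+\eps)\tau$ but was not saved to $B$ at its end; as matroid-blocking is monotone in time and $A+e$ is currently independent, the only remaining possibility is that $e$'s marginal dropped below $\tau'$ at some state during $\tau'$'s processing, which submodularity propagates to $f(e\mid A)<(1+\eps)\tau$ from then on. Two consequences follow. First, for any fixed $e\in V$, $\P{e\in A}\le k\eps/(k+d)\le\eps$, because there are at most $k$ add-steps and each samples uniformly from a bucket of size at least $(k+d)/\eps$. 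Second, writing $f(A\setminus D)\ge\sum_{i:\,e_i\notin D}f(e_i\mid A_{<i})$ (a telescoping consequence of submodularity) and comparing slot-by-slot, the conditional ratio per slot is at least $(|B_i|-d)\tau_i/[|B_i|(1+\eps)\tau_i]\ge(1-\eps)/(1+\eps)\ge 1-2\eps$, which sums to $\E{f(A')}\ge(1-2\eps)\E{f(A)}$.

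For the approximation ratio I apply \Cref{lem:sampling} to the submodular function $g(S):=f(S\cup\Sopt)$, noting that each element of $V$ appears in $A$ with probability at most $\eps$, to obtain $\E{f(A\cup\Sopt)}\ge(1-\eps)f(\Sopt)$. I then partition $\Sopt\setminus A$ into $\Sopt\cap B$, the matroid-blocked set $O_m=\{e\in\Sopt\setminus(A\cup B):A+e\notin\cM\}$, and the low-marginal set $O_s=\{e\in\Sopt\setminus(A\cup B):A+e\in\cM,\,f(e\mid A)<\tau_{\min}\}$, and chain $f(A\cup\Sopt)\le f(A)+f(\Sopt\cap B)+f(O_m\mid A)+f(O_s\mid A)$ using the submodular chain rule and $f(X\mid A)\le f(X)$. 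The low-marginal term is at most $k\tau_{\min}\le\eps\Delta\le\eps f(\Sopt)$, since $f(\Sopt)\ge\Delta$. For $O_m$, I invoke \Cref{lem:hall-lemma} with $G=O_m$, $F=A$, and $F_e$ defined as the circuit-minus-$e$ witness formed at the first time $t_e$ at which $e$ becomes matroid-blocked, so $F_e\subseteq A^{t_e}$; the Lemma produces an injection $h\colon O_m\to A$ with $h(e)\in F_e$, hence $h(e)$ was added at some threshold $\tau_{h(e)}\ge\tau^{t_e}$. The bucket invariant then gives $f(e\mid A)\le f(e\mid A^{t_e-1})<(1+\eps)\tau^{t_e}\le(1+\eps)f(h(e)\mid A^-_{h(e)})$, and telescoping over the injective image yields $\sum_{e\in O_m}f(e\mid A)\le(1+\eps)f(A)$.

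Putting the pieces together in expectation gives $(1-2\eps)f(\Sopt)\le(2+\eps)\E{f(A)}+\E{f(\Sopt\cap B)}$. Since $\Sopt\cap B$ is feasible in $\cM$ and contained in $A'\cup B'$, the $\beta$-approximation of \BBAlg{} implies $\E{f(\Sopt\cap B)}\le\beta\,\E{f(\BBAlg(A'\cup B'))}$. Combining with $\E{f(A')}\ge(1-2\eps)\E{f(A)}$ and $\E{f(S_D)}\ge\max\{\E{f(A')},\E{f(\BBAlg(A'\cup B'))}\}$, a routine convex combination closes the argument at approximation ratio $(2+\beta+O(\eps))$. The main obstacle I foresee is the timing in the Combinatorial Lemma step: one must pick the circuit witness $F_e$ at the snapshot of $A$ at the moment $e$ becomes matroid-blocked, not using the final $A$, otherwise the image of $h$ may land on elements added after $e$ is gone and the threshold comparison driving the charging breaks.
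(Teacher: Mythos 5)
Your proposal follows essentially the same route as the paper's proof: bound the summary size by counting buckets; handle non-monotonicity by applying the Sampling Lemma to $g(S) = f(S \cup \OPT)$ using $\P{e\in A}\le\eps$; decompose $f(\OPT\cup A)$ into $f(A)$, a $B$-piece handled by the $\beta$-approximation of \BBAlg, a matroid-blocked piece charged to $A$ via the Combinatorial Lemma, and a low-marginal piece bounded by $\eps f(\OPT)$; and finally exploit the deletion-robustness of $A$ to replace $\E{f(A)}$ by $\E{f(A')}$. These are exactly the four parts the paper proves as \Cref{lem:nonmonotone}, the display following it, \Cref{lemma:offline-third-eq}, and \Cref{lem:robustness}. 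Your ``bucket invariant'' is the same fact the paper uses implicitly in \Cref{eq:at_cont} and in the proof of \Cref{lemma:offline-third-eq}, and your $(1-2\eps)$ robustness factor is a slightly looser restatement of the paper's $\frac{1-\eps}{1+\eps}$ bound. Your caveat at the end about taking the circuit witness $F_e$ at the snapshot $A^{t_e}$, not the final $A$, is precisely what the paper does (it sets $F_i$ to be $A$ at the time $x_i$ first fails the feasibility test) and is indeed the one subtlety worth flagging.
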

\begin{proof}
In this proof, sets $A$ and $B$ are the sets returned by Algorithm~\ref{alg:centralized-phase-I} at the end of phase I.
We start by analyzing the size of the summary, composed by $A$ and $B$. Set $A\in \cM$, thus its size is no more than the rank of $\cM$, therefore: $|A| \leq k$. Set $B$ is the union of $V_d$ (containing exactly $d$ elements) and $\bigcup_{\tau \in T} B_{\tau}$. Each set $B_{\tau}$ has at most $\frac{k+d}{\eps}$ element and there are at most $\frac{2}{\eps}\log \frac{k}{\eps}$ such sets. Therefore
\[
    |A| + |B| \leq  k + d + \frac{2(k+d)}{\eps^2}\log\frac{k}{\eps} \in O\left(\frac{k + d}{\eps^2}\log{\frac k\eps}\right).
\]

We move our focus on bounding the approximation factor. To that end, we fix any set $D$ with $|D| \le d$ and bound the ratio between the expected value of $f(S)$ and $f(\OPT)$ (we omit the dependence on $D$ since it is clear from the context). As a first step, we bound the value of $\OPT$ with that of $\OPT \cup A$. This passage is trivially true when $f$ is monotone, but needs some work otherwise. 

\begin{lemma}
\label{lem:nonmonotone}
    It holds that $(1-\eps) f(\OPT) \le  \E{f(\OPT \cup A)}.$
\end{lemma}
\begin{proof}
    Every time a new random element $x_i$ is added to the candidate solution $A$, it is drawn uniformly at random from a large pool of elements of cardinality at least $(k+d)/\eps$. Thus $\P{x = x_i} \le \frac{\eps}{k+d}$ for all fixed $x$ and all $i = 1, \dots k.$ By union bound, we have the following:
    \[
        \P{x \in A} \le \sum_{i=1}^k\P{x = x_i} \le \eps \frac{k}{k+d} \le \eps, \; \;
        \forall x \in V.
    \]
    Consider the submodular function $g: 2^V \to \mathbb{R}_{\ge 0}$, defined as $g(T) = f(\OPT \cup T),$ for all $T \subseteq V$. We can apply the sampling Lemma (\Cref{lem:sampling}) on $g$ and the bound on $\P{x \in A}$ to obtain the desired claim:
    \[
        \E{f(\OPT \cup A)} = \E{g(A)} \ge (1-\eps) g(\emptyset) = (1-\eps) f(\OPT).
    \]
\end{proof}

This means we just need to bound the value of $\OPT \cup A$; to do that we introduce three subsets and analyze the resulting terms separately. First, recall that  we denoted by $B'$ the subset of $B$ surviving the deletion: $B' = B \setminus D$; clearly it holds that $B \cap \OPT \subseteq B'$. Then, define $L$ as the set of elements of low value:
\[
    L = \left\{e \in V \setminus D \mid  f(e\mid A) \le \eps \cdot \frac{ f(\OPT)}{k}\right\}.
\]
Intuitively, $L$ contains all the low value elements: these elements do not increase the value of the submodular function considerably if added to $A$ (we do not lose much by ignoring them). Finally, let $F$ be the surviving elements that have not been added to the summary because of feasibility issues; formally, $F = V \setminus (A \cup B \cup L \cup D)$. We can now decompose the contributions of elements in $\OPT \cup A$ using the sets we introduced: by submodularity and \Cref{lem:nonmonotone}, we in fact have:
\begin{align}
    \nonumber
    (1-\eps)f(\OPT) \le & \E{f(\OPT \cup A)} \\
    \nonumber
    \le & \E{f(A)} + \E{f(\OPT \cap B' \mid A)} \\
    \label{eq:partition_OPT}
    + &  \E{f(\OPT \cap F \mid A)}  
    + \E{f(\OPT \cap L \mid A)}
\end{align}

Recall that $A'$ is the candidate solution once the deleted elements are realized; we use its expected value to bound the first term of the inequality. Intuitively, for any element that is part of $A$ the probability of it being deleted is $O(\eps)$, since it is sampled from $(d+k)/\eps$ {\em similar} elements uniformly at random and only $d$ elements are deleted. We formalize this idea in the following Lemma.

\begin{lemma}
\label{lem:robustness}
    It holds that $\E{f(A)} \le \frac{1+\eps}{1-\eps}\ \E{f(A')}$.
\end{lemma}
\begin{proof}
    The proof is similar to that of Lemma 2 in \citet{kazemi17arxiv}, but note that we do assume the monotonicity of $f$. For the sake of the analysis, for each threshold $\tau \in T$, let $A_{\tau}$ denote the set of elements in $A$ that were added during an iteration of the for loop corresponding to threshold $\tau$ in \Cref{alg:centralized-phase-I}. Moreover, let $n_{\tau} = |A_{\tau}|$ and order the elements in $A_{\tau} = \langle x_{1,\tau},x_{2,\tau},\dots, x_{n_{\tau},\tau}\rangle$ according to the order in which they are added to $A_{\tau}.$ Finally, let $I(\ell,\tau)$ be the indicator random variable corresponding to the element $x_{\ell,\tau}$ not being in $D$. Note that the randomness here is with respect to the random draw of the algorithm in Phase I: $D$ is fixed but unknown to the algorithm. The crucial argument is that $x_{\ell,\tau}$ is drawn uniformly at random from a set of cardinality at least $(d+k)/\varepsilon$ where at most $d$ elements lie in $D$, hence 
    \begin{equation}
    \label{eq:robust_proba}
        \P{I(\ell,\tau)} \ge 1- \varepsilon\frac{d}{k+d} \ge 1 - \eps.
    \end{equation}
    We can decompose the value of $A$ as follows using the definition of marginal value:
        \begin{align*}
        f(A) &= \sum_{\tau \in T} \sum_{\ell=1}^{n_{\tau}} f(x_{\ell,\tau}|\cup_{t > \tau}A_t \cup \{  x_{1,\tau}, x_{2,\tau}, \dots, x_{\ell-1,\tau}\}).
    \end{align*}
    The elements added in a specific iteration of the for loop share the same marginal up to an $(1+\varepsilon)$ factor, i.e.,
    \begin{equation} \label{eq:at_cont}
        \tau \le f(x_{\ell,\tau}|\cup_{t>\tau}A_t \cup \{  x_{1,\tau}, x_{2,\tau}, \dots, x_{\ell-1,\tau}\}) \le \tau \cdot (1+\varepsilon), \quad \forall \tau \in T, \quad \forall \ell = 1, \dots, n_{\ell}.        
    \end{equation}
    Thus, summing up those contributions, we have
    \begin{equation}
    \label{eq:robust_ineq}
       \sum_{\tau \in T} |A_{\tau}| \cdot \tau \le f(A) \le (1+\varepsilon) \sum_{\tau \in T} |A_{\tau}| \cdot \tau.  
    \end{equation}
    
    We now decompose in a similar way the value of $A' = A \setminus D$. 
    Let $I(\ell,i) \cdot x$ be a shorthand to denote the element $x$ if the indicator variable $I(\ell,i)$ is $1$ and the empty set otherwise.
    Similarly, we let $I(\ell,i) \cdot X$ be a shorthand to denote the set $X$ if the indicator variable $I(\ell,i)$ is $1$ and the empty set otherwise. We also define $A'_{\tau} = A_{\tau} \setminus D = A_{\tau} \cap A'$, we have
    \begin{align*}
        f(A') &= \sum_{\tau \in T} \sum_{\ell=1}^{n_{\tau}} I(\ell,\tau) \cdot f(x_{\ell,\tau}|\cup_{t > \tau}A_t \cup \{I(1,\tau) \cdot x_{1,\tau}, I(2,\tau) \cdot x_{2,\tau}, \dots, I(\ell-1,\tau) \cdot x_{\ell-1,\tau}\})\\
        &\ge \sum_{\tau \in T} \sum_{\ell=1}^{n_\tau} I(\ell,\tau)\cdot f(x_{\ell,\tau}|\cup_{t > \tau}A_t \cup \{x_{1,\tau}, x_{2,\tau}, \dots, x_{\ell-1,\tau}\})\\
        &\ge \sum_{\tau \in T} \sum_{\ell=1}^{n_{\tau}} I(\ell,\tau) \cdot \tau\\
        &=\sum_{\tau \in T} |A'_{\tau}| \cdot \tau,
    \end{align*}
    where the first inequality follows by submodularity and the second one follows from \Cref{eq:at_cont}. We apply the expected value to the previous inequality, which results in
    \[
        \E{f(A')} \ge \sum_{\tau \in T} \E{|A'_{\tau}|} \cdot \tau \ge (1-\varepsilon)\sum_{\tau \in T} \E{|A_{\tau}|} \cdot \tau \ge \frac{1-\varepsilon}{1
        + \varepsilon}\E{f(A)},
    \]
    where the second inequality follows by linearity of expectation and \Cref{eq:robust_proba}, while the last one from the right hand side of  \Cref{eq:robust_ineq}. 
\end{proof}

For the second term of \Cref{eq:partition_OPT}, observe that $\OPT \cap B' \in \cM$ and is contained in the set of elements passed to \BBAlg, so its value is dominated by $\beta$ times the value of the of $\BBAlg(A' \cup B')$, all in all:
\begin{align} 
    \E{f(\OPT \cap B' \mid A)} \le
    \beta \cdot \E{f(S)}. \label{eq:main-offline-2}
\end{align}
Recall infact that $S$ is the final solution output by the algorithm. Bounding the third term is more involved and is handled in the following Lemma via a charging argument that uses our Combinatorial Lemma (\Cref{lem:hall-lemma}).

\begin{lemma} \label{lemma:offline-third-eq}
    It holds that $\E{f(\OPT \cap F \mid A)} \leq (1+\eps) \E{f(A)}$.
\end{lemma}
\begin{proof}
    Let's order the elements in $\Sopt \cap F = \{x_1, x_2, \dots \}$ according to the order in which they were removed from some $B_{\tau}$ because of feasibility constraint (since they are in $V\setminus (A \cup B \cup D \cup L)$ it must be the case). Note that once an element fails the feasibility test and is removed from some $B_{\tau}$, it never becomes feasible again because of diminishing returns property of submodular functions. Furthermore, for each such $x_i$ let $F_i$ be the set $A$ when $x_i$ fails the feasibility test. We know that $F_i$ is independent since $A$ is independent during the execution of Algorithm~\ref{alg:centralized-phase-I}; moreover, by definition, $F_i + x_i \not \in \cM $. By \Cref{lem:hall-lemma} there exists an injective function $h:\Sopt \cap F \to A$ such that $h(x_i) \in F_{i}$ for all $i$. Let $A_i$ denote the set $A$ when the element $h(x_i)$ gets added to it. Since $h(x_i)$ has been added to $A$ before $x_i$, it holds that $A_i \subseteq F_i$. We have
    \begin{align*}
        f(\Sopt \cap F \mid A) &\le  \sum_{x \in \Sopt \cap F} f(x|A_i) \\
        &\le  \sum_{x \in \Sopt \cap F} (1+\varepsilon) \cdot f(h(x_i)|A_i) \\
        &\le  (1+\varepsilon) \cdot f(A),
    \end{align*}
    where the first inequality follows from submodularity and the fact that $A_i \subseteq A$ for all $i$. The second inequality uses the fact that if $x_i$ was still feasible to add when $h(x_i)$ was added, then their marginals are at most a $(1+\varepsilon)$ factor away. The last inequality follows from a telescopic decomposition of $A$, the fact that an element is added to the candidate solution $A$ only if its marginal is positive (actually larger than a certain non-negative threshold), and the fact that $h$ is injective. Applying the expectation to both extremes of the chain of inequalities gives the Lemma.
\end{proof}

The fourth term of \Cref{eq:partition_OPT} refers to at most $k$ elements and can be bounded based on the definition of $L$: any element $e$ in $L$ is such that $f(e|A) \leq \eps \cdot \frac{f(\OPT)}{k},$ by submodularity we have then that
\begin{align}
\label{eq:main-offline-4}
    \E{f(\OPT \cap L \mid A)} \le \E{\sum_{e \in \OPT \cap L} f(e\mid A)} \le \eps \cdot f(\OPT).
\end{align}

Finally, we compose the four bounds in \Cref{eq:partition_OPT}, thus obtaining the following:
    \begin{align*}
        (1-\eps)f(\OPT) \le &\E{f(A)} + \E{f(\OPT \cap B' \mid A)} \\
        &+ \E{f(\OPT \cap F \mid A)}  
    + \E{f(\OPT \cap L \mid A)} \\
    \le & \left(2 + \eps \right)\E{f(A)} + \beta \E{f(S)} + \eps f(\OPT) \tag*{(Due to (\ref{eq:main-offline-2}), (\ref{eq:main-offline-4}) and Lemma \ref{lemma:offline-third-eq})} \\
    \le & (2 + \eps)\frac{1 + \eps}{1-\eps}\E{f(A')} + \beta \cdot \E{f(S)} + \eps f(\OPT) \tag*{(Lemma \ref{lem:robustness})}\\
    \le & \left[(2 + \eps)\frac{1 + \eps}{1-\eps} + \beta\right] \cdot \E{f(S)} + \eps \cdot f(\OPT) \tag*{(Definition of $S$)}
    \end{align*}

    Rearranging terms we get
    \begin{align*}
        f(\OPT) &\le \frac{\E{f(\OPT \cup A)}}{1-2\eps}\le \left[ \frac{\left(2 + \eps \right)(1+\eps)}{(1-2\eps)(1-\eps)} + \frac{\beta}{1-2\eps}\right] \cdot \E{f(S)} \\
        &\le \left[ 2 + \beta +  \left(4 \beta + 15 \right) \cdot \eps \right] \cdot \E{f(S)},
    \end{align*}
    where in the last inequality we used that $\frac{\left(2 + \eps \right)(1+\eps)}{(1-2\eps)(1-\eps)} \le 2 + 15 \eps$ for all $\eps \in (0,\frac 15)$ and that $1/(1-2\eps) \le (1+4\eps)$ in the same interval. The theorem follows since as long as $\beta$ is a constant it holds that $\left(4 \beta + 15\right) \cdot \eps \in O(\eps)$.
\end{proof}

Plugging in the previous Theorem some routines for non-monotone submodular maximization with matroid constraints we get, for examples, an approximation of $\approx 4.597$ if we use the state of the art algorithm as in \cite{BuchbinderF19} or $\approx 4.73$ if we use the less involved measured continuous greedy \citep{FeldmanNS11}. Using the former routine and handling the $\eps$ term accordingly, it is easy to see we have the following Corollary. 

\begin{corollary}
\label{cor:centralized_nonmonotone}
    Consider the problem of deletion robust non-monotone submodular maximization with matroid constraints and fix any constant $\delta \in (0,1)$, then there exists a constant $C_{\delta}$ such that for any $\eps \in (0,\delta)$, in expectation a ${(4.597+\eps \cdot C_{\delta})}$-approximation algorithm with summary size $O(\frac{(k + d)}{\eps^2}\log \frac{k}{\eps})$ exists.
\end{corollary}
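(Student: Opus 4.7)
The plan is to apply \Cref{thm:robust-centralized} with the auxiliary routine \BBAlg set to the state-of-the-art centralized algorithm for non-monotone submodular maximization subject to a matroid constraint, namely the algorithm of \cite{BuchbinderF19} whose approximation ratio $\beta^\star$ satisfies $2 + \beta^\star \le 4.597$. Since \Cref{thm:robust-centralized} already yields a $(2 + \beta + O(\eps))$-approximation with exactly the summary size claimed here, the target constant $4.597$ appears immediately; the remaining work is bookkeeping to combine all hidden $\eps$-dependencies cleanly.

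Concretely, the proof of \Cref{thm:robust-centralized} actually gives the explicit bound $2 + \beta + (4\beta + 15)\eps$. The \cite{BuchbinderF19} routine itself comes with its own precision parameter $\eps'$, so its realized approximation factor can be written as $\beta(\eps') = \beta^\star + \rho(\eps')$ with $\rho(\eps')\to 0$ as $\eps'\to 0^+$. On any interval $(0,\delta)$ with $\delta \in (0,1)$, the function $\rho$ is upper-bounded by a linear term $c_\delta \cdot \eps'$, since the tolerance enters polynomially in the underlying continuous-greedy-style analysis. I would then couple the two precisions by setting $\eps' = \eps$ and restrict to $\eps \in (0,\delta)$, obtaining
\[
2 + \beta(\eps) + (4\beta(\eps) + 15)\eps \;\le\; 4.597 + c_\delta\cdot \eps + (4\beta(\delta) + 15)\cdot \eps \;=\; 4.597 + C_\delta \cdot \eps,
\]
where $C_\delta := c_\delta + 4\beta(\delta) + 15$ depends only on $\delta$, as required. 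The summary size bound $O\bigl(\tfrac{k+d}{\eps^2}\log \tfrac{k}{\eps}\bigr)$ is inherited verbatim from \Cref{thm:robust-centralized} and is unaffected by the choice of \BBAlg.

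There is no real mathematical obstacle: the only point requiring care is verifying that the precision loss $\rho(\eps')$ of the \cite{BuchbinderF19} routine is at most linear in $\eps'$ on a neighborhood of $0$, which is a minor bookkeeping step obtained by inspecting their analysis, and that $\beta(\eps)$ stays bounded by $\beta(\delta)$ throughout the interval of interest so it can be safely absorbed into the constant $C_\delta$. Overall, the corollary follows as a clean black-box combination of \Cref{thm:robust-centralized} with the best known non-robust routine for the underlying matroid-constrained non-monotone maximization problem.
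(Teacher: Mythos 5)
Your proof matches the paper's: both simply instantiate \Cref{thm:robust-centralized} with the \cite{BuchbinderF19} routine as $\BBAlg$ ($\beta\approx 2.597$), so $2+\beta\le 4.597$ and the explicit $(4\beta+15)\eps$ remainder from the theorem's proof is absorbed into $C_\delta$. The only difference is that you explicitly track the subroutine's internal precision parameter and couple it with $\eps$, whereas the paper silently treats $\beta$ as a fixed constant; this is slightly more careful bookkeeping but the same argument.
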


If we move our attention to monotone objective, we get an approximation factor $4$ if we use as subroutine the greedy algorithm \citep{fisher78-II} or $2 + \tfrac{e}{e-1} \leq 3.582$ if we use continuous greedy as in \citet{CalinescuCPV11}. Furthermore, it is possible to slightly improve the bound on the cardinalily of the summary.

\begin{corollary}
\label{cor:centralized_monotone}
    Consider the problem of deletion robust monotone submodular maximization with matroid constraints and fix any constant $\delta \in (0,1)$, then there exists a constant $\tilde C_{\delta}$ such that for any $\eps \in (0,\delta)$, in expectation a ${(3.582+\eps \cdot \tilde C_{\delta})}$-approximation algorithm with summary size $O(k + \frac{d}{\eps^2}\log \frac{k}{\eps})$ exists.
\end{corollary}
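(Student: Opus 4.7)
The plan is to derive this corollary from Theorem~\ref{thm:robust-centralized} by making two modifications tailored to the monotone case: choosing an appropriate subroutine \BBAlg, and shrinking the bucket-size threshold in \Cref{alg:centralized-phase-I} to tighten the summary size. For the approximation, I would use the continuous greedy algorithm of \citet{CalinescuCPV11} as the black-box \BBAlg. This yields $\beta = e/(e-1)$, so Theorem~\ref{thm:robust-centralized} directly gives an approximation factor of $2 + e/(e-1) + O(\eps) \le 3.582 + O(\eps)$. The work is therefore concentrated on the summary size.

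The key observation is that the bucket-size threshold $(k+d)/\eps$ in the while-loop of \Cref{alg:centralized-phase-I} is used for two distinct purposes in the proof of Theorem~\ref{thm:robust-centralized}. First, in \Cref{lem:nonmonotone} it ensures $\P{x \in A} \le \eps$ for all $x \in V$ so that the sampling lemma handles non-monotonicity. Second, in \Cref{lem:robustness} it controls the probability that a specific $x_{\ell,\tau}$ selected in an iteration belongs to the adversarial deletion set $D$. When $f$ is monotone, the first use disappears entirely, since $f(\OPT \cup A) \ge f(\OPT)$ holds unconditionally. The second use, however, only requires the bucket to contain at least $d/\eps$ elements: since $|D| \le d$, sampling uniformly from a pool of size at least $d/\eps$ gives $\P{x_{\ell,\tau}\in D} \le \eps$, which is all that is needed in \Cref{eq:robust_proba}. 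So I would replace the threshold $(k+d)/\eps$ by $d/\eps$ in \Cref{alg:centralized-phase-I} when $f$ is monotone.

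With this change, the summary-size accounting becomes $|A| \le k$ while each $B_\tau$ has size at most $d/\eps$, and there are at most $\tfrac{2}{\eps}\log\tfrac{k}{\eps}$ thresholds, giving $|B| \le d + O\bigl(\tfrac{d}{\eps^2}\log\tfrac{k}{\eps}\bigr)$. Altogether $|A|+|B| = O\bigl(k + \tfrac{d}{\eps^2}\log\tfrac{k}{\eps}\bigr)$ as required. I would then reinspect the rest of the Theorem~\ref{thm:robust-centralized} proof to confirm that nothing else relied on the $(k+d)/\eps$ threshold: the partition of $\OPT$ into $B'$, $F$, and $L$, the feasibility/charging argument via \Cref{lemma:offline-third-eq}, and the low-value bound \Cref{eq:main-offline-4} all go through verbatim, using only monotonicity and submodularity and the new robustness bound.

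Finally, I would carry through the bookkeeping for the constants. In the monotone case, \Cref{lem:nonmonotone} is replaced by $f(\OPT) \le f(\OPT \cup A)$ with no $(1-\eps)$ factor, which actually slightly simplifies the chain of inequalities at the end of the proof of Theorem~\ref{thm:robust-centralized}; combining with $\beta = e/(e-1)$ and absorbing all lower-order $\eps$ terms into a single constant $\tilde C_\delta$ (depending on $\delta$ since all the intermediate factors $1/(1-2\eps)$, $(1+\eps)/(1-\eps)$, etc.\ are bounded for $\eps \in (0,\delta)$) gives the stated ${(3.582 + \eps \cdot \tilde C_\delta)}$-approximation. The only subtle step is checking that the tightened threshold $d/\eps$ indeed suffices for the robustness lemma; this is the main sanity check rather than a real obstacle, as the bound $\P{x_{\ell,\tau}\in D} \le d/|B_\tau| \le \eps$ holds immediately.
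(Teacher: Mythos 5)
Your proposal is correct and matches the paper's proof: both replace the bucket threshold $(k+d)/\eps$ with $d/\eps$ after observing that the sampling lemma for non-monotonicity (\Cref{lem:nonmonotone}) is unnecessary when $f$ is monotone, verify that \Cref{lem:robustness} still goes through since $d/\eps \ge d$ suffices for $\P{I(\ell,\tau)} \ge 1-\eps$, and plug in continuous greedy with $\beta = e/(e-1)$. The reasoning and the resulting summary-size and approximation bookkeeping are the same as the paper's.
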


\begin{proof}
    First, we study the improved bound on the summary size. Since the function is monotone, we do not need to argue as in \Cref{lem:nonmonotone} to bound $f(\OPT)$ with $f(\OPT \cup A)$; thus we can modify the condition on the while loop of Algorithm \ref{alg:centralized-phase-I} to simply consider $|B_{\tau}| \ge d/\eps$. With this tweak, there are still at most $\frac 1{\eps} \log(k/\eps)$ buckets, but each one has cardinality at most $d/\eps$, thus the summary size is $O(k +  \frac{d}{\eps^2}\log \frac{k}{\eps}).$
    From the approximation point of view, \Cref{lem:robustness} still holds, as the crucial property that $\P{I(\ell,\tau)} \ge 1-\eps$ is verified (we are sampling uniformly at random from a bucket of cardinality at least $d/\eps$ where there are at most $d$ deleted elements). The rest of the analysis of \Cref{thm:robust-centralized} goes trough without any other difference and we get that 
    \[
        f(\OPT) \le \E{f(\OPT \cup A)} \le \left[\left(2 + \eps \right)\frac{1+\eps}{1-\eps} + \beta\right] \cdot \E{f(S)} + \eps \cdot f(\OPT)       
    \]
    We use as optimization routine \BBAlg continuous greedy, therefore we can plug in $\beta = \frac{e}{e-1}$ and, rearranging the terms we obtain
    \begin{align*}
        f(\Sopt) &\le \left(\frac{e}{(e-1)(1-\eps)} + \frac{(2+\eps)(1+\eps)}{(1-\eps)^2}\right) \E{f(S)}\\
        &\le \Big{[}\frac{e}{e-1} + 2  + \underbrace{\frac{8 e - 7 -\delta(2e-1)}{(e - 1) (\delta - 1)^2}}_{\tilde C_{\delta}} \cdot \eps \Big{]}\E{f(S)},
    \end{align*}
    where the last inequality can be numerically verified and holds for any  $\eps \in (0,\delta).$
\end{proof}

\section{Streaming Setting} \label{section:streaming}

\begin{algorithm}[t]
\caption{Streaming Algorithm Phase I} \label{alg:streaming-phase-I}
\begin{algorithmic}[1]
\STATE \textbf{Input:} Precision $\varepsilon$, deletion parameter $d,$ sampling probability $p$ and parameter $\gamma$
\STATE $T \gets \{(1+\varepsilon)^{i} \mid i \in \mathbb{Z}\}$
\STATE $A \gets \emptyset$, $\Delta \gets 0$, $\tau_{min} \gets 0$, $B_{\tau} \gets \emptyset$ for all $\tau \in T$ 
\STATE $L\gets \emptyset, \Gamma \gets \emptyset, K \gets \emptyset, R\gets \emptyset, F\gets \emptyset$ \COMMENT{Auxiliary sets for the analysis}
\STATE $V_d \gets $ the first $d$ arriving elements
\FOR{every arriving element $e'$}
    \STATE Add $e'$ to $V_d$ and then pop element $e$ with smallest value from $V_d$
    \STATE $\Delta \gets \max\{f(e), \Delta\}$, $\tau_{min} \gets \frac{\varepsilon}{1+\varepsilon} \cdot \frac{\Delta}{k}, T \gets \{\tau \in T \mid \tau \ge \tau_{min}\}$
    \STATE Remove all $B_{\tau}$ s.t. $\tau \not \in T$\COMMENT{Deleted elements are added to $L$}
    \STATE\label{line:test} \textbf{if} $\tau_{min} > f(e\mid A)$ \textbf{then} Discard $e$ and \textbf{continue}\COMMENT{$L \gets L + e$\quad}
    \STATE Find the largest threshold $\tau \in T$ s.t. $f(e\mid A)\ge \tau$ and add $e$ to $B_{\tau}$
    \WHILE{$\exists \tau \in T$ such that $|B_{\tau}| \geq \frac{d}{\varepsilon}$
    }
        \STATE\label{line:sample}Remove one element $g$ from $B_\tau$ u.a.r. \COMMENT{$\Gamma \gets \Gamma + g$\quad}
        \STATE $w(g) \gets f(g\mid A)$
        \STATE Draw $X_{g}$ independently from a Bernoulli with parameter $p$
        \IF{$A + g \in \I$}
            \STATE \textbf{if} $X_g = 1$, \textbf{then} $A \gets A + g$ 
            \STATE \label{line:sampling1} \textbf{else} Discard $g$ \COMMENT{$R \gets R + g$\quad}
        \ELSE 
            \STATE $k_g \gets \argmin\{ w(k) \mid k \in C(A + g)\}$
            \IF{\label{line:swap} $w(g) > (1+\gamma) \cdot w(k_g)$}
                 \STATE \textbf{if} $X_g = 1$, \textbf{then} $A \gets A + g - k_g$  \COMMENT{$K \gets K + k_g$}
                \STATE \label{line:sampling2} \textbf{else} Discard $g$ \COMMENT{$R \gets R + g$\quad}
            \ELSE 
                \STATE Discard $g$ \COMMENT{$F \gets F + g$\quad }
             \ENDIF
        \ENDIF
        \STATE \textbf{Update} $\{B_{\tau}\}$ according to $A$ \COMMENT{Deleted elements are added to $L$}
    \ENDWHILE
\ENDFOR
\STATE $B \gets V_d \cup_{\tau \in T} B_{\tau}$
\RETURN $A, B$
\end{algorithmic}
\end{algorithm}

In this section we present our algorithm for Deletion Robust Submodular Maximization in the streaming setting. Here, the elements of $V$ in the first phase arrive on a stream and we want to compute a small summary $W$ using limited (online) memory. 
Our approach consists of carefully mimicking the swapping algorithm \citep{ChakrabartiK15} with subsampling \citep{FeldmanK018} in a deletion robust fashion; to that end, the algorithm maintains an independent candidate solution $A$ and buckets $B_{\tau}$ that contain {\em small} reservoirs of elements from the stream with similar marginal contribution each element with respect to the current solution $A.$ Beyond the $B_{\tau}$, an extra buffer $V_d$ containing the best $d$ elements seen so far is kept.

We start explaining the algorithm by defining the thresholds $\tau$. In the streaming setting we do not have an a priori estimate of $\OPT$, so that we do not know upfront which are the thresholds corresponding to {\em high quality} elements. This issue is overcome by initially considering {\em all} the powers of $(1+\eps)$ and progressively removing the ones too small with respect to $\Delta$, i.e., the $(d+1)$-st largest value seen so far.
Every new element $e'$ that arrives is first used to update $V_d$:  if its value is larger then the minimum in $V_d$ then we add it to $V_d$ and remove the smallest value element from $V_d$ and denote it by $e$. If the value of the newly arrived element is smaller than smallest value element in $V_d$, we keep the set $V_d$ intact. For simplicity, we set $e$ to be the newly arrived element $e'$ in this case. In other words, element $e$ is either the new element or the element which lost its place in $V_d$ to the new element.

In either case, we use element $e$ in the next steps of the algorithm. 
 Then, the value of $\Delta$ is updated if $f(e) > \Delta$  and all the buckets corresponding to thresholds that are too small are deleted, this guarantees that only a logarithmic number of active buckets is kept. At this point  element $e$ is put into the correct bucket $B_{\tau}$, if such bucket still exists. 
Now, new elements are drawn uniformly at random from the buckets $B_{\tau}$ with size at least $d /\eps$ as long as no bucket contains more than $d/\eps$ elements. These drawn elements are added to $A$ if and only if the subsampling is successful and it is either feasible to add them directly or they can be {\em swapped} with a {\em less important} element in $A$. To make this notion of importance more precise, each element in the solution is associated with a weight, i.e., its marginal value to $A$ when it was first considered to be added to the solution. To complete the notation, we let weight of a set of elements be the sum of the weights of its elements. 
An element in $A$ is swapped for a more promising one only if the new one has a weight that is larger by at least a $(1 + \gamma) $ factor and it is sampled independently with some probability $p$, while maintaining $A$ independent. 

Every time $A$ changes, the buckets $B_{\tau}$ are completely updated so to maintain the invariant that $B_{\tau}$ contains only elements whose marginal value with respect to the current solution $A$ is within $\tau$ and $(1+\eps) \cdot \tau.$
This property is crucial to ensure the deletion robustness: every bucket contains elements that are similar, i.e., whose marginal density with respect to the current solution is at most a multiplicative $(1+\eps)$ factor away. When the stream terminates, the algorithm passes the deletion robust summary $W$ (composed by the candidate solution $A$ and $B$, containing $V_d$ and the surviving buckets) to Algorithm~\ref{alg:phase-II}. The pseudocode of this algorithm is presented in \Cref{alg:streaming-phase-I}. Before stating and proving our Theorem, we describe some of the properties of the weight function $w$ that governs the swapping.

\begin{lemma}
\label{lem:killed_elements}
Let $w$ be the weight function, $A$ the candidate solution, $A' = A \setminus D$ and let $K$ be the set of all the elements that, at some point, were in $A$ but were later swapped. Then, the following properties hold:
\begin{itemize}
    \item[$(i)$] $\gamma \cdot w(K) \le w(A) \le f(A)$
    \item[$(ii)$] $w(A') \le f(A')$
    \item[$(iii)$] $f(A \cup K) \le w(A \cup K)$
\end{itemize}
\end{lemma}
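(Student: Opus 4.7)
The plan is to prove all three items by combining the identity $w(g) = f(g \mid A_g)$, where $A_g$ denotes the state of the candidate solution at the moment $w(g)$ is recorded on \Cref{line:sample}, with submodularity of $f$ and a simple ordering argument; throughout I would use that by construction $A$ and $K$ are disjoint (each element of $K$ has been removed from $A$ before the stream ends).

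For $(i)$, I would order the final $A = \{g_1, \dots, g_m\}$ by addition time. Each $g_j$ with $j < i$ lies in the final $A$, hence was never swapped out, so $\{g_1, \dots, g_{i-1}\} \subseteq A_{g_i}$. Submodularity then yields $w(g_i) \le f(g_i \mid \{g_1, \dots, g_{i-1}\})$, and telescoping gives $w(A) \le f(A)$. For the left inequality I would build an injective map $\phi : K \to A \cup K$ sending each $k \in K$ to the element $g$ whose arrival on \Cref{line:swap} triggered its removal (so $k = k_g$ for that swap); the swap condition forces $w(\phi(k)) > (1+\gamma)\, w(k)$, and $\phi$ is injective because each swap removes exactly one element. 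Splitting $\phi(K) \subseteq A \cup K$ into its disjoint intersections with $A$ and $K$ gives $w(\phi(K)) \le w(A) + w(K)$, so summing the lower bound yields $(1+\gamma)\, w(K) < w(A) + w(K)$, i.e., $\gamma\, w(K) \le w(A)$.

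Part $(ii)$ replays the right-inequality argument of $(i)$ on the surviving subset: order $A' = A \setminus D = \{g_{i_1}, \dots, g_{i_{m'}}\}$ by addition time; since each $g_{i_l}$ with $l < j$ still lies in the final $A$, one has $\{g_{i_1}, \dots, g_{i_{j-1}}\} \subseteq A_{g_{i_j}}$, and submodularity plus telescoping give $w(A') \le f(A')$. For $(iii)$ the direction of the submodularity bound flips. Ordering $A \cup K = \{h_1, \dots, h_N\}$ by addition time, every element present in $A$ when $h_i$ is considered was added earlier and either remains in the final $A$ or is later swapped out, hence belongs to $A \cup K$; this shows $A_{h_i} \subseteq \{h_1, \dots, h_{i-1}\}$, whence submodularity yields $f(h_i \mid \{h_1, \dots, h_{i-1}\}) \le f(h_i \mid A_{h_i}) = w(h_i)$, and telescoping $f(A \cup K)$ concludes.

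The only genuinely delicate step is the construction of $\phi$ in $(i)$: one must carefully verify that the image really lies in $A \cup K$ rather than in a larger set of considered elements, and that the random sampling $X_g$ does not break the correspondence (an element rejected by $X_g = 0$ is discarded and triggers no swap, so it is never $\phi(k)$ for any $k$). The remaining steps are standard submodular telescoping arguments and require no further tricks.
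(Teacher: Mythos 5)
Your proposal is correct, and for parts $(ii)$ and $(iii)$ and the right-hand inequality of $(i)$ it is essentially identical to the paper's argument: order by addition time, use $\{g_1,\dots,g_{i-1}\}\subseteq A_{g_i}$ for elements of the final $A$ (resp.\ $A_{h_i}\subseteq\{h_1,\dots,h_{i-1}\}$ for $A\cup K$), apply submodularity in the appropriate direction, and telescope. The only genuine difference is the left inequality of $(i)$: you build an injective map $\phi:K\to A\cup K$ sending each swapped-out element $k_g$ to the element $g$ that triggered its removal, then compare $w(\phi(K))>(1+\gamma)\,w(K)$ against $w(\phi(K))\le w(A\cup K)=w(A)+w(K)$; the paper instead tracks the running total $w(A)$ as a nonnegative, monotonically increasing potential, noting that each swap raises it by $w(g)-w(k_g)>\gamma\,w(k_g)$, so $\gamma\,w(K)\le w(A_{\text{final}})$. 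Both arguments are correct and handle the subtlety that $\phi(k)$ (equivalently, the paper's $g$) may itself later land in $K$. The potential argument is marginally more direct; your injective-map argument is perhaps more transparent about \emph{where} the charge goes, and you correctly flagged the two places it could break (that $\phi(K)\subseteq A\cup K$, and that rejected draws with $X_g=0$ never trigger a swap and hence never appear as $\phi(k)$). No gap.
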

\begin{proof}
    The proof of the first inequality is similar to the one of Lemma 9 in \cite{ChakrabartiK15}. Crucially, the weight function $w$ is linear and once an element enters $A$, its weight is fixed forever as the marginal value it contributed when entering $A$, moreover the weight of all the elements added to $A$ is strictly positive. During the run of the algorithm, every time an element $k_g$ is removed from $A$, the weight of $A$ increases by  $w(g) - w(k_g)$ by its replacement with some element $g$. Moreover, $\gamma \cdot w(k_g) \le w(g) - w(k_g)$ for every element $k_g \in K$ since $(1+\gamma)w(k_g) \leq w(g)$. Summing up over all elements in $K$, and recalling that the weight function is positive for all the elements in $A \cup K$, it holds that 
    \[
          w(K) = \sum_{k_g \in K}w(k_g) \leq \frac 1{\gamma}\sum_{k_g \in K} \left[w(g) - w(k_g)\right] \le \frac 1{\gamma} w(A).
    \]
    
    We show now the second inequality. Let $<a_1, a_2, \dots a_{\ell}>$ be the elements in $A$, sorted in the order in which they were added to $A$. We have that 
    \[
        f(A) = \sum_{i=1}^{\ell} f(a_i|\{a_1,\dots, a_{i-1}\}) \ge \sum_{i=1}^{\ell} f(a_i|A_{a_i}) = \sum_{i=1}^{\ell} w(a_i) = w(A),
    \]
    where $A_{a_i}$ is the solution set right before $a_i$ is added to $A$. The inequality follows from submodularity, since $\{a_1,\dots, a_{i-1}\} \subseteq A_{a_i}$.
    Similarly, let $I(a)$ the indicator random variable corresponding to $a \not \in D$, while $I(a)\cdot a$ is a shorthand for the element $a$ if $I(a)=1$ and the empty set otherwise.
    \begin{align*}
        f(A') &= \sum_{i=1}^{\ell} I(a_i)\cdot f(a_i|\{I(a_1)\cdot a_1,\dots, I(a_{i-1})\cdot a_{i-1}\}) \\
        &\ge \sum_{i=1}^{\ell}  I(a_i)\cdot f(a_i|\{a_1,\dots,a_{i-1}\}) \\
        &\ge \sum_{i=1}^{\ell}  I(a_i)\cdot f(a_i|A_{a_i}) = w(A').
    \end{align*}
    For the last inequality, let $x_1, x_2, \dots x_{t}$ be the elements in $A \cup K$, sorted by the order in which they were added to $A$. We have that 
    \[
        f(A \cup K) = \sum_{i=1}^{t} f(x_i|\{x_1,\dots, x_{i-1}\}) \le \sum_{i=1}^{t} f(x_i|A_{x_i}) = \sum_{i=1}^{t} w(x_i) = w(A \cup K),
    \]
    where $A_{x_i}$ is the solution set right before $x_i$ is added to $A$. The inequality follows from submodularity: $A_{x_i}$ contains all the elements entered in $A$ before $x_i$ minus those elements that have already been removed from it.
\end{proof}

We are now ready for the main result of the Section. Instead of stating a general black box result as in \Cref{thm:robust-centralized}, we consider directly the algorithm from \cite{BuchbinderF19} as \BBAlg and analyze the relative algorithm for a suitable choice of $p$ and $\gamma.$ This is just for clarity of exposition: the dependence on $\beta$ of the generic approximation is more involved than in the centralized case. 

\begin{theorem}
\label{thm:robust-streaming}
    Consider the problem of deletion robust non-monotone submodular maximization with matroid constraints in the streaming scenario. Fix any constant $\delta \in (0,1)$, then there exists a constant $G_{\delta}$ such that for any $\eps \in (0,\delta)$, in expectation a $(9.435 + G_\delta \cdot \eps)$-approximation algorithm with summary and memory size $O(k + \frac{d}{\eps^2}\log \frac{k}{\eps})$ exists.
\end{theorem}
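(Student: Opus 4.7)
The plan is to proceed in three main phases: a direct memory-size count, a reduction of the non-monotone analysis to a monotone-style decomposition via the Bernoulli subsampling, and finally a charging argument that accounts for the chains of swaps introduced by Line~\ref{line:swap}.

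For the memory and summary size, I would observe that the candidate $A$ is always independent in $\cM$, so $|A| \le k$; the reservoir $V_d$ carries exactly $d$ elements; the pruning rule for $\tau_{\min} = \frac{\eps}{1+\eps}\cdot\frac{\Delta}{k}$ keeps at most $O(\frac{1}{\eps}\log\frac{k}{\eps})$ buckets active at any time; and the while-loop starting at Line~\ref{line:sample} enforces $|B_\tau| \le d/\eps$. Summing these contributions yields the stated $O(k + \frac{d}{\eps^2}\log\frac{k}{\eps})$.

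For the approximation, fix an adversarial $D$ with $|D| \le d$. The Bernoulli$(p)$ coins on Lines~\ref{line:sampling1} and~\ref{line:sampling2} are crucial: every element entering $A$ must first pass its own independent coin, so $\P{x \in A \cup K} \le p$ for every fixed $x \in V$, with $K$ as in \Cref{lem:killed_elements}. Applying \Cref{lem:sampling} to the submodular function $g(T) = f(\OPT \cup T)$ yields the non-monotone analogue of \Cref{lem:nonmonotone}: $(1-p)\, f(\OPT) \le \E{f(\OPT \cup (A \cup K))}$. By submodularity,
\[
    f(\OPT \cup (A \cup K)) \le f(A \cup K) + \sum_{x \in \OPT \setminus (A \cup K)} f(x \mid A \cup K),
\]
and I would partition $\OPT \setminus (A \cup K)$ into $\OPT \cap B'$, $\OPT \cap L$, $\OPT \cap F$, and $\OPT \cap R$, where $B' = B \setminus D$. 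The key structural fact is that for any $g$ ever processed by the while-loop, the state $A_g$ at that moment is a subset of the final $A \cup K$, so by submodularity $f(g \mid A \cup K) \le f(g \mid A_g) = w(g)$, which lets me control each summand through the weight function.

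The four pieces would be bounded separately. (i)~$\OPT \cap B'$ lies in the input of $\BBAlg$, contributing at most $\beta \cdot \E{f(S)}$ with $\beta \approx 5.205$ plugged in from \cite{BuchbinderF19}. (ii)~$\OPT \cap L$ is bounded by $O(\eps) f(\OPT)$ using the choice of $\tau_{\min}$, analogously to \eqref{eq:main-offline-4}. (iii)~For $\OPT \cap R$, the independence of $X_g$ from the algorithm's history up to $g$'s consideration event yields $\E{\sum_{g \in R} w(g)} = \frac{1-p}{p} \E{w(A \cup K)}$, which \Cref{lem:killed_elements}(iii) and~(i) convert into a bound involving $\E{f(A)}$. (iv)~For $\OPT \cap F$ I would apply \Cref{lem:graph} to the DAG on $A \cup K \cup (\OPT \cap F)$ whose out-edges from every non-sink $v$ point to the matroid circuit that spanned $v$ when $v$ was expelled from $A$ (for $v \in K$) or rejected by the swap test (for $v \in \OPT \cap F$). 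Every non-sink is then spanned by its out-neighborhood, and the lemma delivers an injective map $\psi\colon \OPT \cap F \to A$; the $(1+\gamma)$ gap enforced on Line~\ref{line:swap} telescopes along every path, bounding $\sum_{g \in \OPT \cap F} w(g)$ by a $\gamma$-dependent multiple of $w(A \cup K)$.

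Finally, I would invoke \Cref{lem:killed_elements} to convert the $w$-bounds for $A$, $K$, and $A \cup K$ into $f$-bounds for $f(A)$, and then apply a robustness step mirroring \Cref{lem:robustness} to pass from $\E{f(A)}$ to $\E{f(A')} \le \E{f(S)}$; the argument for this step is identical to the centralized one, since each element added to $A$ is drawn uniformly from a bucket of size at least $d/\eps$, of which at most $d$ are deleted. The resulting master inequality has the form $f(\OPT) \le \Phi(p, \gamma, \beta) \cdot \E{f(S)} + O(\eps) f(\OPT)$, and a numerical minimization of $\Phi(\cdot, \cdot, 5.205)$ over $p \in (0,1)$ and $\gamma > 0$ gives the constant $9.435$, with the $\eps$-dependent lower-order terms absorbed into $G_\delta$. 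The main obstacle is step~(iv): in contrast to the centralized setting, swaps chain, so a naive mapping of $\OPT \cap F$ to the blocking element $k_g$ is neither into the final $A$ nor injective, and \Cref{lem:graph} combined with the multiplicative gap on Line~\ref{line:swap} is precisely the tool that resolves this.
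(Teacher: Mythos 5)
Your overall architecture mirrors the paper's: the memory count, the sampling-lemma reduction $(1-p)f(\OPT) \le \E{f(\OPT \cup A \cup K)}$, the four-way decomposition, the robustness step for $A'$, and the use of \Cref{lem:graph} to chase swap-chains for $\OPT \cap F$. However there are two concrete gaps that prevent your plan from reaching the claimed constant.

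First, the value of $\beta$ you plug in is wrong. Phase~II (\Cref{alg:phase-II}) runs a \emph{centralized} algorithm \BBAlg on the small surviving summary $A' \cup B'$, so the relevant guarantee is the centralized one, $\beta \approx 2.597$ from \cite{BuchbinderF19}. The figure $5.205$ is the \emph{streaming} state of the art (used in the comparison with \cite{MirzasoleimanK017} in \Cref{table:our_results}), and it is not what appears in the master inequality. With $\beta = 5.205$ the quantity $\frac{(2+\gamma)[\gamma^2 + (\beta + 2 ) \gamma + 1]}{\gamma(1+\gamma)}$ stays around $13$ for all $\gamma$, so the numerical claim that minimizing $\Phi(\cdot,\cdot,5.205)$ yields $9.435$ is false.

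Second, and more structurally, you bound the contributions of $\OPT \cap F$ and $\OPT \cap R$ \emph{independently}: $\E{w(\OPT\cap F)}$ by a $(1+\gamma)$-multiple of $\E{w(A\cup K)}$, and $\E{w(\OPT\cap R)}$ by $\tfrac{1-p}{p}\E{w(A\cup K)}$. The paper's proof of the analogous bound (the lemma numbered \ref{eq:str-swapping}) is deliberately sharper: the injective charging map $h$ is built on all of $(\OPT\cap\Gamma)\setminus R$, and together with the fact that $h$ is the identity on $\OPT \cap A$ it yields $\E{w(\OPT\cap F)} \le (1+\gamma)\E{w(A)} - (1+\gamma)\E{w(\OPT\cap(A\cup K))}$, i.e.\ a bound with a \emph{negative} term. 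That negative term exactly cancels the subsampling loss $\E{w(\OPT\cap R)} = \tfrac{1-p}{p}\E{w(\OPT\cap(A\cup K))}$ when $p = 1/(\gamma+2)$, leaving just $(1+\gamma)\E{w(A)}$. Dropping the cancellation, as you do, costs roughly an extra $(\gamma+1)(1+1/\gamma)$ term: a quick optimization of your version (even with the correct $\beta\approx 2.597$) lands well above $9.435$. So while your restriction of the DAG to $A\cup K\cup(\OPT\cap F)$ is fine (since elements in $F$ are sources and all paths to sinks stay inside $A\cup K$), the charging has to be applied to the larger set $(\OPT\cap\Gamma)\setminus R$ so that the $\OPT\cap(A\cup K)$ term can be subtracted out and matched against the $R$ loss.
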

\begin{proof}
    We start by bounding the memory of the algorithm. Sets $A$ and $V_d$ always contains at most $k$, respectively $d$, elements. Every time a new element of the stream is considered, all the active $B_{\tau}$ contain at most $d/\eps$ elements; furthermore there are always at most $O(\frac 1{\eps}\log \frac{k}{\eps})$ of them. This is ensured by the invariant that the active thresholds are smaller than $\Delta$ (by submodularity) and larger than $\tau_{\min}.$ Overall, the memory of the algorithm and the summary size is thus $O(k + \frac{d}{\eps^2}\log \frac{k}{\eps})$.
    
     As in the analysis of \Cref{thm:robust-centralized}, we now fix any set $D$ and study the relative expected performance of our algorithm. In the pseudocode we have introduced some auxiliary sets that are useful in the analysis: the base set $V$ is partitioned into the elements that are considered at some point in line \ref{line:sample} (contained in $\Gamma$), those in the buckets or $V_d$ (contained in set $B$) and the ones with low marginals (contained in set $L$). 
    We have all the ingredients to decompose $\OPT \cup A \cup K$ accordingly:
    \begin{align}
    \nonumber
        (1-p)f(\Sopt) \le& \E{f(\Sopt \cup A \cup K)} \\
    \nonumber
    =& \E{f(A \cup K)} + \E{f(\Sopt \cap L\mid A \cup K)} \\
    \label{eq:str-decomposition}
        &+ \E{f(\Sopt \cap B \mid A \cup K)} +  \E{f(\Sopt \cap \Gamma \mid A \cup K)}
    \end{align}
    Note that the first inequality follows from the uniform sampling step and the sampling Lemma: each fixed element belong to $A \cup K$ with a probability that is at most $p$. 
    
    The second and third terms of the decomposition are fairly easy to bound. If an element $e$ is added to $L$ at some point it means that its marginal contribution with respect to the current solution $A$ is smaller than the current estimate $\tau_{min}$ of $\eps f(\OPT)/k$ (of which it is consistently a lower bound), by submodularity this implies that also its contribution with respect to the final $A \cup K$ is such that $f(e|A \cup K) \le \eps f(\OPT)/k$, all in all:
    \begin{equation}
    \label{eq:str-small-els}
        f(\OPT \cap L \mid A \cup K) \le \sum_{x \in \OPT \cap L} f(x| A \cup K) \le \eps \cdot f(\OPT).
    \end{equation}
    
    We know that $\OPT \cap D = \emptyset$, this implies that $\OPT \cap B$ is contained in $B'$ and thus in the summary $W$ that is passed to the second phase algorithm. Plus, note that $\OPT \cap B$ is independent because it is contained in $\OPT$. So, if we use as $\BBAlg$ a routine for centralized submodular maximization with matroid constraint that guarantees a $\beta$ approximation, we have:
    \begin{align}
    \label{eq:str-OPTcapB}
        \E{f(\Sopt \cap B\mid A \cup K)} \le \E{f(\Sopt \cap B)} \le \beta \cdot \E{f(S)}. 
    \end{align}
    
    At this point we get back to the decomposition in \Cref{eq:str-decomposition} and bound the first term. We show that the value of all the elements that {\em at some point} were in the candidate solutions, i.e.,$(A \cup K)$, is comparable to (a constant factor of) $f(A')$. This result handles two challenges: it shows that the total value of the elements swapped is upper bounded by $1/\gamma$ that of those that are not swapped, and that $A$ is indeed robust to deletion. 

\begin{lemma}
\label{lem:streaming_robust}
    It holds that $\E{f(A\cup K)} \le \frac{1+\eps}{1-\eps} (1+\frac 1{\gamma}) \E{f(A')}$.
\end{lemma}
\begin{proof}
    As a first step, we show that $w(A)$ is close to $f(A')$, at least in expectation. Let $A_{\tau}$ be the subset of elements that were added into $A$ coming from $B_{\tau}$ and $A'_{\tau} = A_{\tau} \setminus D$. Moreover, let $a_t^{\tau}$ be the $t^{th}$ element added to $A_{\tau}$ (if any). We have the following:
    \begin{align*}
        \E{|A'_{\tau}|} &= \E{\sum_{t=1}^{|A_{\tau}|} \mathbbm{1}(a_t^{\tau} \not \in D) } = \sum_{t=1}^{+\infty}\E{ \mathbbm{1}(a_t^{\tau} \not \in D) \cdot \mathbbm{1}(|A_{\tau}| \ge t) } \\
        &=\sum_{t=1}^{+\infty} \E{\mathbbm{1}(a_t^{\tau} \not \in D)\big||A_{\tau}| \ge t} \cdot \P{|A_{\tau}| \ge t} \\
        &\ge (1-\varepsilon)\sum_{t=1}^{+\infty}\P{|A_{\tau}| \ge t}\\
        &=(1-\varepsilon)\E{|A_{\tau}|}.
    \end{align*}
    The crucial observation is that when the algorithm decides to add an element to $A$ from $B_{\tau}$, then the probability that the element belongs to $D$ is at most $\eps.$ Let $A_a$ be the elements in $A$ when $a$ is added, so that $w(a) = f(a|A_a)$. Moreover $I(a)$ is the indicator variable of the event $a \in A'$ given that $a \in A$. We have
    \[
        w(A) = \sum_{\tau \in T} \sum_{a \in A_{\tau}} w(a)\text{, while } w(A') = \sum_{\tau \in T} \sum_{a \in A_{\tau}} I(a) \cdot w(a) 
    \]
    We know that the weight of an element $a$ coming from $B_t$ is such that $\tau \le w(a) \le (1+\eps)\cdot \tau$. Therefore we can proceed similarly to what we had in \Cref{lem:robustness}:
    \begin{align}
    \nonumber 
        \E{w(A')} &= \E{\sum_{\tau \in T} \sum_{a \in A_{\tau}} I(a) \cdot w(a)} = \sum_{\tau \in T} \E{\sum_{a \in A_{\tau}} I(a) \cdot w(a)} \ge \sum_{\tau \in T} \tau \cdot \E{\sum_{a \in A_{\tau}} I(a)}\\
    \label{eq:robustness_streaming}
        &= \sum_{\tau \in T} \tau \cdot \E{|A'_{\tau}|} \ge  (1-\varepsilon) \cdot \sum_{\tau \in T} \tau\cdot \E{|A_{\tau}|} \ge \frac{(1-\varepsilon)}{(1+\varepsilon)} \E{w(A)}.
    \end{align}
    Now, let's move our attention to $w(A) + w(K)$.
    Using the properties as in \Cref{lem:killed_elements} of the weight function we have that
    \begin{align}
    \nonumber
        \E{f(A \cup K)} &\le \E{w(A \cup K)}\tag*{(Property (iii))}
        \\ 
        &= \E{w(A) + w(K)}
        \tag*{(Linearity of $w$)}        \\ 
        &\le \left( 1 + \frac 1{\gamma}\right) \E{w(A)}
        \tag*{(Property (i))}\\ 
        &\le \left( 1 + \frac 1{\gamma}\right)  \frac{1+\eps}{1-\eps} \cdot \E{w(A')}
        \tag*{(Due to (\ref{eq:robustness_streaming}))}\\ 
        &\le \left( 1 + \frac 1{\gamma}\right)  \frac{1+\eps}{1-\eps} \cdot \E{f(A')}
        \tag*{(Property (ii))}
        \\
        \label{eq:str-AcupK}
        &\le \left( 1 + \frac 1{\gamma}\right)\frac{1+\eps}{1-\eps}\cdot \E{f(S)}
    \end{align}
    \end{proof}
    
    We are left with bounding the marginal contribution of elements in $\OPT \cap \Gamma$ with respect to $A \cup K$. $\Gamma \setminus(A \cup K)$ is partitioned into two disjoint sets: $F$ that is the set of all the elements that failed the swapping test (line \ref{line:swap}) and $R$ that is the set of all the elements that were removed because of sampling either in line \ref{line:sampling1} or \ref{line:sampling2}.
    
\begin{lemma}
\label{eq:str-swapping}
Choosing $p = 1/(\gamma + 2)$, it holds that  
\[
        \E{f(\Sopt \cap \Gamma \mid A \cup K)} \le (1+\gamma) \frac{1+\eps}{1-\eps}\cdot \E{f(S)}.
\]
\end{lemma}
\begin{proof}
    Let $g$ be any element in $\OPT \cap \Gamma$, and fix any compatible story $\cE_g$ of the algorithm up to the point when $g$ is considered in line \ref{line:sample} of \Cref{alg:streaming-phase-I}. The story $\cE_g$ deterministically induces the candidate solution $A_g$ when $g$ is considered. If $g$ is not in $F$, i.e. it is possible to either add it directly or swap it with some $k_g \in A_g$ such that $f(g \mid A_g) = w(g) \ge (1+\gamma) w(k_g)$, then we have two cases: either $g$ is successfully sampled and swapped with some element in the current candidate solution $A_g$, or $g$ is added to $R$ and discarded. Conditioning on $\cE_g$, we have:
    \[
        p \cdot \E{\ind{g \in R} w(g) \mid \cE_g} = p \cdot (1-p) \E{w(g) \mid \cE_g} = (1-p) \cdot \E{\ind{g \in A \cup K}  w(g) \mid \cE_g}.
    \]
    Summing over all such $g$ and taking the expectation over all the compatible $\cE_g$, we get:
    \begin{equation}
    \label{eq:sampled}
        p \cdot \E{w(\OPT \cap R)}  = (1-p) \cdot \E{w(\OPT \cap (A \cup K))}.
    \end{equation}
    We now move our attention to the elements in $\OPT \cap F$, i.e., the good elements in $\OPT$ that have been discarded because of the matroid constraint. As a first step we show the existence of the injection $h:(\OPT \cap \Gamma) \setminus R \to A$ with the following properties:
    \begin{itemize}
        \item[$(a)$] $h$ restricted to $A \cap \OPT$ is the identity
        \item[$(b)$] $w(g) \le (1+\gamma) w(h(g))$ for all $g \in \OPT \cap F$ 
        \item[$(c)$] $w(g) \le w(h(g))$ for all $g \in (\OPT \cap K)$ 
    \end{itemize}
    
    We prove the existence of such $h$ by constructing a suitable graph $G$ where we then apply \Cref{lem:graph}. The nodes of $G$ are the elements of $\Gamma$, while the edges are created iteratively as the algorithm considers elements in $A$. Let $u$ be a generic element arriving in $\Gamma \setminus R$, and let $A_u$ be the candidate solution at that point. If $A_u + u \in \cM$, then no edge is created at that iteration, otherwise, there exists a unique cycle $C(A_u+u)$ in $A_u + u$ containing $u$. Now, we have two cases, either $u$ is swapped with some element $x$ in $C(A_u + u) \cap A_u$, in which case we create directed edges from $x$ to each element in $C(A_u + u) - x$, or $u$ is rejected. In the latter case, we create directed edges from $u$ to each element in $C(A_u + u) - u$. Note that, in both cases, the out-neighborhood of the node that gets discarded spans the element corresponding to the node itself.
    It is also clear that the graph $G$ generated at the end of the procedure is acyclic, as an out-edge is only created when an element is discarded towards elements that have not been discarded yet. Thus, the graph respects the assumption of \Cref{lem:graph} and there exists an injective function $h$ such that, every vertex/element $u \in (\OPT \cap \Gamma)\setminus R$ is associated with a sink $h(u)$ that is reachable from $u$. From a different perspective, this $h$ is a charging function that associates each element in $(\OPT \cap \Gamma)\setminus R$ with an element in the final solution $A$ that \emph{accounts} for it. It is easy to see that $h$ respects the three properties we want to enforce. $h$ restricted to $A \cap \OPT$ is the identity as elements in $A$ are sinks. Consider now any element $g \in \OPT \cap F$, this is an element that has been discarded without being added to the solution, this means that outgoing edges from $g$ are such that $w(g) \le (1+\gamma) w(u)$, for all $u \in \delta^+(g).$ If we focus on the unique path from $g$ to $h(g)$, we see that after the first step, when the weight decreases by at most a $(1+\gamma)$ factor, all the following edges do not decrease the weight, this is because an outgoing edge from an element of the solution either points to an element it is swapped with (whose weight is larger by at least a $(1+\gamma)$ factor) or to another element in the solution but with larger weight (as we swap the element with smallest weight in the cycle). The same argument clearly implies that, for all the elements already in the solution that are later swapped, i.e. in $\OPT \cap K$, it holds that the weight does not decrease along the path to the associated sink. We can use the properties of the injective function $h$ to bound $w(\OPT \cap F)$.
    \begin{align}
        \mathbb{E}&[w(\OPT \cap F)] \le (1 + \gamma) \E{w(h(\OPT \cap F))} \tag*{Property $(b)$} \\
        &\le (1 + \gamma) \E{w(h(\OPT \cap F)) + w(h(\OPT \cap K))- w(\OPT \cap K))} \tag*{Property $(c)$} \\
        \nonumber
        &=(1 + \gamma) \E{w(h(\OPT \cap (F \cup K)) - w(\OPT \cap K) + w(\OPT \cap A) - w(\OPT \cap A)}\\
        &=(1 + \gamma) \E{w(h((\OPT \cap \Gamma)\setminus R)) - w(\OPT \cap (A \cup K)))}\tag*{Property $(a)$}\\
        \label{eq:explaination}
        &\le (1 + \gamma) \E{w(A)} - (1+\gamma)(w(\OPT\cap (A \cup K)))
    \end{align}
    Observe that in the last inequality we use that $w(a) \ge 0$ for all $a \in A$, as $h((\OPT \cap \Gamma)\setminus R)$ may be a strict subset of $A.$ We can finally proceed to prove the Lemma.

    \begin{align*}
        \mathbb{E}[f(\Sopt \cap \Gamma&\mid A \cup K)] \le \E{\sum_{x \in \Sopt \cap F} f(x | A \cup K)} + \E{\sum_{x \in \Sopt \cap R} f(x | A \cup K)}\\
        \le & \E{w(\Sopt \cap F)} + \E{w(\Sopt \cap R)} \\
        \le & (1+\gamma) \E{w(A)} + \left(\frac{1-p}{p} - 1 - \gamma\right)\E{w(\OPT \cap (A \cup K))}\tag*{By (\ref{eq:sampled}) and (\ref{eq:explaination})}\\
        = & (1+\gamma)\E{w(A)}\tag*{By the choice of $p$}\\
        \le & (1+\gamma)\frac{1+\eps}{1-\eps}\cdot \E{f(S)}\tag*{As in \Cref{lem:streaming_robust}}.
    \end{align*}
    Note that the first two inequalities follow from submodularity and the fact that the candidate solution when an element is considered (and thus that is used to compute its weight) is always contained in $A \cup K$. 
\end{proof}
    Plugging \Cref{eq:str-small-els,eq:str-AcupK,eq:str-OPTcapB,eq:str-swapping} into \Cref{eq:str-decomposition} we obtain the following:
    \begin{align*}
        \left(1-\frac{1}{2+\gamma}\right)f(\Sopt) \le& \E{f(A \cup K)}
        +\E{f(\Sopt \cap L|A \cup K)}\\
        &+ \E{f(\Sopt \cap B|A \cup K)}+ \E{f(\Sopt \cap \Gamma|A \cup K)}\\
        \le& \left[\left(2+\gamma +\frac1 \gamma\right)\frac{1+\eps}{1-\eps} + \beta \right]\cdot\E{f(S)} + \eps \cdot f(\OPT)
    \end{align*}
    Rearranging terms, one gets
    \begin{align*}
        f(\OPT) &\le \left(1 - \eps - \frac{1}{2+\gamma}\right)^{-1}\left[\left(2+\gamma +\frac1 \gamma\right)\frac{1+\eps}{1-\eps} + \beta \right]\cdot \E{f(S)}\\
        &= \frac{(2+\gamma)[\gamma^2 + (\beta + 2 ) \gamma + 1 + \eps(\gamma^2 + (2 - \beta)  \gamma + 1)]}{\gamma (1-\gamma)(1-\eps)[1+\gamma -\eps (\gamma+2)]}\cdot\E{f(S)}\\
        &= \frac{(2+\gamma)[\gamma^2 + (\beta + 2 ) \gamma + 1]}{\gamma[1+\gamma -\eps (\gamma+2)]}\frac{1+\eps}{1-\eps}\cdot\E{f(S)}
    \end{align*}
    The Theorem follows by using as optimization routine the state of the art algorithm for centralized (non-monotone) submodular maximization with matroid constraint ($\beta \approx 2.597$ as in \cite{BuchbinderF19}) and then choose $\gamma$ (and accordingly $p$) as to minimize the multiplicative factor. In particular, if we set $\gamma \approx 1.746$, then for any fixed $\delta$ there exists a constant $G_{\delta}$ such that we get an approximation of the form $(9.435 + G_{\delta} \cdot \eps)$ for all $\eps$ in $(0,\delta).$
\end{proof}

    As a corollary of the previous Theorem, we get an improved approximation considering monotone objective. There is clearly no need for the sampling procedure and we can use the greedy \cite{fisher78-II} or continuous greedy \citet{CalinescuCPV11} subroutine  as $\BBAlg$. 
\begin{corollary}
\label{cor:streaming}
    Consider the problem of deletion robust monotone submodular maximization with matroid constraints in the streaming setting and fix any constant $\delta \in (0,1)$, then there exists a constant $\tilde G_{\delta}$ such that for any $\eps \in (0,\delta)$, in expectation a ${(5.582+\eps \cdot \tilde G_{\delta})}$-approximation algorithm with summary size and memory $O(k + \frac{d}{\eps^2}\log \frac{k}{\eps})$ exists.
\end{corollary}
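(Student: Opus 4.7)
The plan is to run \Cref{alg:streaming-phase-I} with sampling probability $p = 1$ (so the Bernoulli filter is deactivated and the set $R$ of sampled-out elements is empty) and swap gap $\gamma = 1$, and to use continuous greedy \citep{CalinescuCPV11} as the black-box \BBAlg in \Cref{alg:phase-II}, which delivers a $\beta$-approximation with $\beta = e/(e-1)$ in the monotone regime. The memory footprint is unchanged: $A$ still has size at most $k$, $V_d$ has size $d$, and at most $O(\tfrac{1}{\eps}\log\tfrac{k}{\eps})$ active buckets of size at most $d/\eps$ are maintained, for a total of $O(k + \tfrac{d}{\eps^2}\log\tfrac{k}{\eps})$ as claimed.

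For the approximation bound the proof of \Cref{thm:robust-streaming} adapts almost verbatim. The only place where $p<1$ is used in an essential way is the very first inequality $(1-p)f(\OPT) \le \E{f(\OPT \cup A \cup K)}$, which comes from the Sampling Lemma. When $f$ is monotone, this is replaced by the trivial bound $f(\OPT) \le f(\OPT \cup A \cup K)$, so the opening factor $(1-p)$ disappears. The decomposition of $\E{f(\OPT \cup A \cup K)}$ into contributions from $L$, $B$, and $\Gamma$ is identical to \Cref{eq:str-decomposition}. The bounds for the $L$ and $B$ terms (\Cref{eq:str-small-els,eq:str-OPTcapB}) go through unchanged, and \Cref{lem:streaming_robust} still gives $\E{f(A \cup K)} \le (1+\tfrac{1}{\gamma})\tfrac{1+\eps}{1-\eps}\E{f(S)}$.

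For the $\Gamma$ term, since $R = \emptyset$ when $p=1$, \Cref{eq:sampled} becomes vacuous and one only needs to bound $\E{f(\OPT \cap F \mid A \cup K)}$. The graph-theoretic charging argument behind \Cref{eq:explaination} still produces an injection $h$ with $w(g) \le (1+\gamma) w(h(g))$ for every $g \in \OPT \cap F$ (no sampling is involved in that construction), hence
\begin{equation*}
\E{f(\OPT \cap \Gamma \mid A \cup K)} \le \E{w(\OPT \cap F)} \le (1+\gamma)\, \E{w(A)} \le (1+\gamma)\,\tfrac{1+\eps}{1-\eps}\,\E{f(S)}.
\end{equation*}
Summing the four pieces we obtain the same master inequality as in the proof of \Cref{thm:robust-streaming} but without the $1/(2+\gamma)$ loss on the left-hand side:
\begin{equation*}
f(\OPT) \le \Bigl[\bigl(2 + \gamma + \tfrac{1}{\gamma}\bigr)\tfrac{1+\eps}{1-\eps} + \beta\Bigr]\, \E{f(S)} + \eps\, f(\OPT).
\end{equation*}

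The function $\gamma \mapsto \gamma + 1/\gamma$ is minimized at $\gamma = 1$, yielding the constant $4 + \beta = 4 + e/(e-1) \le 5.582$. Rearranging the $\eps f(\OPT)$ term to the left and absorbing the $\tfrac{1+\eps}{1-\eps}$ factor is exactly the same bookkeeping as in the proof of \Cref{cor:centralized_monotone}, and it produces an extra term of the form $\eps \cdot \tilde G_\delta$, uniformly over $\eps \in (0,\delta)$, for a suitable constant $\tilde G_\delta$. The only mild subtlety is that one must check that plugging $p = 1$ does not break the robustness argument in \Cref{lem:streaming_robust} (it does not, since that lemma does not use sampling at all; it only uses that elements added to $A$ from $B_\tau$ come with the $(1-\eps)$ survival bound, which is maintained).
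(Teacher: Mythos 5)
Your proof is correct and reaches the same constant $4 + e/(e-1) \le 5.582$ as the paper. The memory count, the removal of the $(1-p)$ loss via monotonicity, and the bounds for the $L$, $B$, and $A\cup K$ terms all coincide with the paper's argument. The one place you genuinely diverge is the $\OPT \cap \Gamma$ term: the paper invokes Theorem~1 of \citet{Varadaraja11} as a black box to get $w(X) \le 2\,w(A)$ for every independent $X \subseteq \Gamma$ (with $r=2$, i.e.\ $\gamma = 1$), and applies it with $X = \OPT \cap \Gamma$, whereas you re-run the DAG/charging construction from the proof of \Cref{eq:str-swapping} with $R = \emptyset$ to obtain $\E{w(\OPT \cap F)} \le (1+\gamma)\E{w(A)}$ directly. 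Both routes produce the same factor $(1+\gamma)=2$ on $\E{w(A)}$ and hence identical master inequalities, so the final $\eps$-bookkeeping is unchanged. Your version is more self-contained — it avoids the external citation and makes transparent why the choice $p = 1/(\gamma+2)$ is only needed to cancel the $\E{w(\OPT \cap R)}$ term, which vanishes once $p=1$ — while the paper's version is shorter modulo the cited theorem. One small point worth keeping (which you noted): the injection argument also yields the stronger $\E{w(\OPT \cap F)} \le (1+\gamma)\bigl(\E{w(A)} - \E{w(\OPT \cap (A\cup K))}\bigr)$; dropping the nonnegative subtrahend is safe but it is this slack, in the non-monotone case, that is spent to absorb the $\E{w(\OPT \cap R)}$ term.
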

\begin{proof}
    The proof is similar to the one of \Cref{thm:robust-streaming}. The only difference is that we do not do subsampling as $f(\OPT)$ is smaller than $f(\OPT \cup A \cup K)$ by monotonicity. Thus, we set $p=1$ and $\gamma = 1$ and then \Cref{eq:str-decomposition} becomes:
    \begin{align*}
        f(\Sopt) \le& \E{f(\Sopt \cup A \cup K)} \\
    =& \E{f(A \cup K)} + \E{f(\Sopt \cap L\mid A \cup K)} \\
        &+ \E{f(\Sopt \cap B \mid A \cup K)} +  \E{f(\Sopt \cap \Gamma \mid A \cup K)} \\
        \le& 2\frac{1+\eps}{1-\eps} \E{f(S)} + \eps f(\OPT)\\
        &+ \beta \cdot \E{f(S)} + \E{f(\OPT \cap \Gamma|A \cup K)}
    \end{align*}
    For the last term, we use a more general statement. For all $X \subseteq \Gamma,$  $X\in \cM$, then it holds that $
        w(X) \le 2 \cdot w(A).$
    This is a direct application of Theorem 1 of \citet{Varadaraja11} (using a single matroid and setting $r = 2$). More in specific, we can imagine to restrict the stream to consider only the elements in $\Gamma$, with the order in which they are considered in line \ref{line:sample}.
     Choosing $X = \OPT \cap \Gamma$, we then get that 
        \[
            \E{f(\OPT \cap \Gamma|A \cup K)} \le \E{w(\OPT \cap \Gamma)} \le 2 \E{w(A)}.
        \]
    We use as optimization routine \BBAlg continuous greedy, therefore we can plug in $\beta = \frac{e}{e-1}$ and, rearranging the terms we obtain
    \begin{align*}
        f(\Sopt) &\le \left(2\frac{1+\eps}{(1-\eps)^2} + \frac{3e  -2}{(1-\eps)(e-1)}\right) \E{f(S)}\\
        &\le \Big{[}\frac{e}{e-1} + 4  + \underbrace{\frac{9 e - 8 -\delta(5e-4)}{(e-1) (1-\delta)^2}}_{\tilde{G}_{\delta}} \cdot \eps \Big{]}\E{f(S)}.
    \end{align*}
    The last inequality can be numerically verified and holds for any  $\eps \in (0,\delta).$
    
\end{proof}

\section{Conclusion and Future Work}

We presented the first space-efficient constant-factor approximation algorithms for deletion robust submodular maximization over matroids in both the centralized and the streaming setting. In particular, we are also the first to design space-efficient deletion robust algorithms for non-monotone objective with {\em any} type of constraints.
A natural direction for future work is to extend these results and ideas to other constraints (e.g., multiple matroids and knapsack), and to consider fully dynamic versions with insertions and deletions.

\section*{Acknowledgements}
Federico Fusco’s work was partially supported by the ERC Advanced Grant 788893 AMDROMA
“Algorithmic and Mechanism Design Research in Online Markets” and the MIUR PRIN project
ALGADIMAR “Algorithms, Games, and Digital Markets.” Part
of this work was done while Federico was an intern at Google
Research, hosted by Paul D\"utting.

\bibliographystyle{plainnat}
\bibliography{cite}

\end{document}